\newcommand{\baike}[1]{\textcolor{black}{#1}}
\newcommand{\bshee}[1]{\textcolor{black}{#1}}
\newcommand{\bks}[1]{\textcolor{black}{#1}}
\newtheorem{theorem}{Theorem}
\newtheorem{lemma}{Lemma}
\newtheorem{proposition}{Proposition}
\newtheorem{corollary}{Corollary}
\newtheorem{remark}{Remark}
\newtheorem{definition}{Definition}
\begin{document}
\pagenumbering{gobble}
%Here goes the title

\title{Optimal Mitigation of SIR Epidemics Under Model Uncertainty}

\author{%\IEEEauthorblockN{
Baike She, %\IEEEauthorrefmark{2}, 
Shreyas Sundaram, %\IEEEauthorrefmark{1},
and %\IEEEauthorrefmark{2},
Philip E. Par\'{e}*%\IEEEauthorrefmark{2}
\thanks{*Baike She, Shreyas Sundaram, and Philip E. Par\'{e} are with the Elmore Family School of Electrical and Computer Engineering at Purdue University.
E-mails: \{bshe, sundara2, philpare\}@purdue.edu.
Research supported in part 
by the C3.ai Digital Transformation Institute sponsored by C3.ai Inc. and the Microsoft Corporation, and in part 
   by the National Science Foundation, grants NSF-CMMI \#1635014 
%   , 
%   NSF-CNS \#2028738, 
   and NSF-ECCS \#2032258.
}
%Eldon Tyrell\IEEEauthorrefmark{4},~\IEEEmembership{Fellow,~IEEE}}

%\IEEEauthorblockA{\IEEEauthorrefmark{1}
%  Department of Electrical and Computer Engineering, Purdue University, West Lafayette, IN, 47907 USA
% \IEEEauthorblockA{\IEEEauthorrefmark{2} School of Electrical and Computer Engineering, Purdue University, West Lafayette, IN, 47907, USA}
%\IEEEauthorblockA{\IEEEauthorrefmark{3}Starfleet Academy, San Francisco, CA 96678 USA}
%\IEEEauthorblockA{\IEEEauthorrefmark{4}Tyrell Inc., 123 Replicant Street, Los Angeles, CA 90210 USA}% <-this % stops an unwanted space
%\thanks{Manuscript received December 1, 2012; revised August 26, 2015. 
%Corresponding author: M. Shell (email: http://www.michaelshell.org/contact.html).}}
}

%\author
%{\IEEEauthorblockN{Author 1}
%\IEEEauthorblockA{School of Electrical and\\Computer Engineering\\
%University\\
%Location\\
%Email: }
%\and
%\IEEEauthorblockN{Author 2}
%\IEEEauthorblockA{University\\
%Location\\
%Email: }
%}
\maketitle
%Main body starts
\begin{abstract}
 We study the impact of \baike{ model parameter uncertainty} %parameter learning process %on a learning-based control framework 
 on optimally mitigating the spread of epidemics. We capture the epidemic spreading process using a susceptible-infected-removed (SIR) epidemic model and consider testing for isolation as the control strategy. We use a testing strategy to remove (isolate) a portion of the infected population. Our goal is to maintain the daily infected population below a certain level, while minimizing the total number of tests. Distinct from existing works on leveraging control strategies in epidemic spreading, we propose a testing strategy by overestimating the seriousness of the epidemic and study the feasibility of the system under the impact of \baike{model parameter uncertainty}. Compared to the optimal testing strategy, we establish that the proposed strategy under model parameter uncertainty will flatten the curve effectively but 
%  with 
require
more tests and a longer time \bshee{period}.
\end{abstract}

%\begin {IEEEkeywords}
%IoT, Ontology, Semantics,  SSN, OWL, OBOE, OpenIoT, SWEET, SUMO
%\end{IEEEkeywords}
\section{Introduction}
Resource allocation for epidemic mitigation is of great importance for both resource and risk management during a pandemic.
In response to the ongoing COVID-19 pandemic, researchers have  studied  the  use  of  optimal  control  formulations \cite{tsay2020modeling, perkins2020optimal, acemoglu2021optimal, morris2021optimal}. 
\bshee{The authors in} \cite{tsay2020modeling} studied an “on-off” policy between
strict social distancing and \bshee{not, considering the} social and economic costs. %and illustrated the policy to effectively flatten the curve under the
%consideration of the social and economic cost. 
% The authors in 
\bshee{In order to study the impact of social distancing restrictions}, \cite{perkins2020optimal} calibrated epidemic models to data from the USA. In addition, \cite{acemoglu2021optimal} proposed an optimal control strategy for epidemic mitigation by combining both molecular and serology testing, \bshee{and} %in order to contemplate the impact of both tests.
% The authors in 
\cite{morris2021optimal} \bshee{further} discussed %the %potential drawbacks of 
\bshee{how} leveraging optimal/near-optimal strategies %that these strategies are 
is not robust to implementation errors. In addition to optimal control strategies, researchers leveraged model predictive control frameworks \cite{kohler2020robust,carli2020model,zino2021analysis,she2022mpcepi}, and other  strategies \cite{khadilkar2020optimising,scarabaggio2021nonpharmaceutical}  to generate optimal/sub-optimal policies for epidemic mitigation. For instance, \cite{scarabaggio2021nonpharmaceutical} exploited the structure of the transmission networks to determine vaccination targets, while %by prioritizing the individuals with high-degree or high-centrality from assimilated the epidemic spreading networks;
\cite{bastani2021efficient} formulated the \baike{COVID-19 mitigation problem} %for epidemic mitigation 
\bshee{using} a reinforcement learning framework.
Other works considering epidemic control and resource allocation include \cite{bloem2009optimal, nowzari2016epidemics, di2017optimal,sharomi2017optimal, di2019optimal,liu2019bivirus,dangerfield2019resource,preciado2014epidemic_optimal,han2015data}.

% Compared to optimal control frameworks, model predictive control with the ability to consider future performance is another popular method in resource allocation for epidemic mitigation problem.
% The authors in \cite{kohler2020robust} considered MPC to minimize
% the number of fatalities caused by COVID-19, subject to constraints on the
% economic cost of social distancing, and \cite{morato2020optimal}
% formulated an optimal  on-off (binary) social isolation strategy through the MPC framework to mitigate the COVID-19 contagion in Brazil.The authors in \cite{peni2020nonlinear} captured logical relations
% between model variables through temporal logic in constraints for the COVID-19 mitigation framework. In addition, \cite{carli2020model} studied MPC on a networked compartmental model to capture the  epidemic mitigation in multiple connected regions. Compared to the literature that demonstrated the efficiency of MPC for epidemic control through numerical analysis, \cite{watkins2019robust}
% provided theoretical foundations for MPC of a stochastic networked epidemic compartmental
% model. Besides optimal control and model predictive control frameworks, \cite{khadilkar2020optimising} leveraged reinforcement learning method to learn optimal lockdown policies that control the spread of the disease while balancing both health and economic costs. One potential drawbacks of the model predictive control is that it is challenging to obtain accurate prediction models \cite{bertozzi2020challenges}.

% Acknowledgments---Will not appear in anonymized version
The aforementioned research was established upon the prior knowledge of the epidemic model parameters. Nevertheless, works regarding \bshee{real-time epidemic modeling} and prediction \cite{chowell2017fitting, baker2018mechanistic,wilke2020predicting} \bshee{have shown} that it is difficult to predict the \bshee{behavior of} epidemic spreading processes. Hence, \bshee{obtaining} accurate epidemic spreading parameters is challenging  \baike{when formulating} real-time epidemic modeling and control problems. In this work, we %bridge the gap  between the epidemic learning/prediction and control problems. 
\baike{tackle optimal epidemic control problems under the impact of parameter uncertainties.}
We aim to modify the optimal epidemic mitigation strategy in \cite{casella2020can} by leveraging a range of known model parameters generated by epidemic parameter learning processes %(generated by epidemic parameter learning processes)  
instead of accurate model parameters.
%in order to guarantee the constraints for the epidemic mitigation problem. 
We consider a testing-for-isolation strategy \cite{casella2020can},
which 
% endeavors to remove 
removes
the infected  population from the infected group through
uniform random sampling, \baike{i.e., the control input variable}. Similar to %the idea of 
vaccination strategies that remove the susceptible population from the mixed group \cite{grundel2021coordinate}, the testing-for-isolation strategy is another widely adopted method \bshee{for} epidemic mitigation\cite{casella2020can,acemoglu2021optimal,Purdue_testing}.

Our main contribution is to propose a testing strategy for epidemic mitigation under the impact of \baike{model uncertainties introduced by real-time epidemic modeling \bshee{parameter estimating,} and state estimation}. %parameter learning and state estimation processes. 
Specifically, we bridge the gap between parameter estimation for epidemic spreading processes and theoretical analysis \bshee{of} optimal control strategies \bshee{for} epidemic mitigation. Assuming the range of the model parameters and states are \bshee{obtained} by any given method, %e.g. \cite{chowell2017fitting, baker2018mechanistic, zou2020epidemic}, 
we adapt testing-for-isolation strategies  \cite{acemoglu2021optimal,casella2020can} to study the \baike{additional control cost} of the \baike{parameter and state uncertainties} on the proposed optimal testing policy \cite{casella2020can}. We propose a testing strategy by overestimating the seriousness of the epidemic to adapt the optimal testing policy under the ranges of the \bshee{obtained} parameters and states to guarantee the system feasibility. Further, by comparing the testing cost of the proposed testing strategy with the optimal testing policy, we conclude \bshee{that} the proposed testing strategy under the parameter learning and state estimation processes can flatten the curve effectively, but will cost more tests and time.

The paper is organized as follows. In Section~\ref{section2}, we introduce the optimal epidemic mitigation problem and the goal of this work. In Section~\ref{section3}, we propose a testing strategy to study the feasibility of the control problem under the \baike{parameter and state uncertainties}. We characterize the \baike{control} cost via comparison with the optimal testing strategy \baike{generated under accurate models and states}. In Section~\ref{section4}, we illustrate the proposed control strategy through simulations. Section~\ref{section5} presents the conclusions and future work.
\section{Problem Formulation}
\label{section2}
In this section, we introduce the epidemic spreading model and formulate the optimal epidemic resource allocation for mitigation problem.
Our goal is to propose a potential way for policy-makers to implement a feedback testing strategy to mitigate an epidemic through \bshee{estimated} parameters and states.
As illustrated by the arrows from the top and middle blocks to the bottom block in Fig. \ref{fig_framework}, 
we leverage the  model parameters and epidemic states \baike{with uncertainties} to study the control policy.
\begin{figure}
  \begin{center}
    \includegraphics[ trim = 1cm 0.5cm 0.5cm 0.5cm, clip, width=\columnwidth]{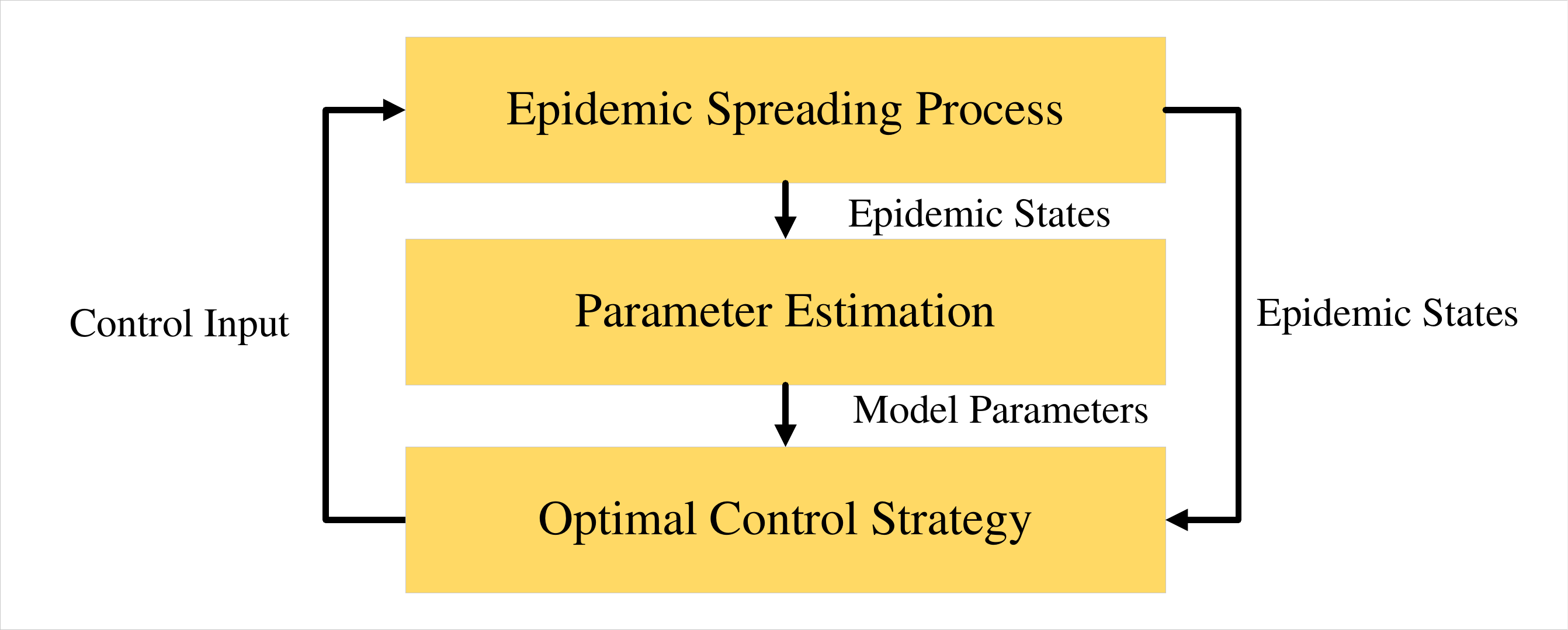}
  \end{center}
  \caption{Control Framework}
  \label{fig_framework}
\end{figure}
%In this work, we aim to analysis the bottom block in Fig. \ref{fig_framework}, i.e., the impact of the parameter learning process on the optimal control policy.
% \begin{figure}[h]%[I]
% %\begin{centering}
% \includegraphics[ trim = 0cm 0cm 0cm 0cm, clip, width=0.5\textwidth]{Framework.pdf}
% %\par\end{centering}
% \centering{}\caption{Learning-based MPC Framework}
% \label{fig_framework}
% \end{figure}
\subsection{Testing for Mitigation}
In this subsection, we  present the model for the epidemic control problem. We consider the following closed-loop Susceptible-Infected-Recovered/Removed (SIR) model:
\begin{subequations}\label{Eq: Con_Dynamics}
\begin{alignat}{3}
 \frac{dS(t)}{dt} &= -\beta S(t)I(t), \label{eq:S_u}\\
   \frac{dI(t)}{dt} &= \beta S(t)I(t) -(\gamma+u(t)) I(t),\label{eq:I_u}\\
    \frac{dR(t)}{dt} &= (\gamma+u(t)) I(t).\label{eq:R_u}
\end{alignat}
\end{subequations}
The parameters $\beta$ and $\gamma$ represent the time-invariant transmission rate and removal rate, respectively\bshee{, and the control input $u(t)$ captures testing strategies that isolate/remove $u(t)\times100\%$ of the detected infected population from the infected group, represented by $u(t)I(t)$}. In this work, we assume the removal rate captures any processes that separate the detected infected group from the whole population, which include the recovery process, hospitalization, deaths, etc. 
We define mitigation as
maintaining the infection level  under a certain threshold through control strategies.
%The testing strategy can be achieved by uniformly randomly sampling $u(t)\times100\%$ of the whole population. %the mixed susceptible and infected groups.
%We assume no sampling bias nor testing delay, and that the testing is completely accurate.
%Based on the perfect testing accuracy assumption, we are able to select $u(t)\times100\%$ of the infected population to be tested positive and thus removed from the infected group, captured by $u(t)I(t)$ in \eqref{Eq: Con_Dynamics}. 
Note that when $u(t)=0$, the system in \eqref{Eq: Con_Dynamics} becomes the classic SIR model \cite{kermack1927_sir}. %Note we use \textit{infection level} to describe the number of the infected population $I(t)$, $\forall t\geq0$.

\subsection{{Optimal Testing Problem}}
In this subsection, we introduce the  optimal control framework.
Consider the system formulated in \eqref{Eq: Con_Dynamics}.  The goal for the epidemic mitigation problem is to optimally allocate the testing resources during the pandemic such that the daily infected population is maintained at/below the desired infection threshold. 
In this work, we  consider mitigating the epidemic by minimizing the total number of tests during the epidemic through the \bshee{following} cost function 
\begin{equation} 
\label{eq:cost_function}
J(u(t))=\int_{0}^{+\infty}u(t)dt.
\end{equation}
% where $t_h$ is the first time that $\frac{dI}{dt}<0$ [BS: maybe add description later]. %and \bs{$T$ represents the prediction horizon.} %Note that $t=1 ,2 ,3, ...$ represents day 1, day 2, day 3, ..., respectively.  
In order to obtain the \bshee{testing-for-isolation strategy} that minimizes the total
number of tests needed during the epidemic spreading process while ensuring that the fraction of infected 
individuals  
remains below a desired threshold, we formulate the following optimization problem, 
\begin{subequations}\label{eq:prob}%\label{eq:GP_SIS}
% \begin{equation}
\begin{align}
% \boldsymbol{u}^{*}=argmin 
&\min_{u(t), 0\leq t \leq +\infty}  \, \, J(u(t)) \\
&\text{s.t.}  \, \, 
% \end{equation}
% subject to the epidemic dynamics
% \begin{equation}
\dot{\boldsymbol{x}}(t)=f(\boldsymbol{x}(t),u(t)),  \\ 
% \end{equation}
% and constraints on states and control inputs, respectively,
% \begin{equation}
\label{eq:constraints}
&0 \leq I(t) \leq \bar{I}, 
\underline{u} \leq {u}(t) \leq \bar{u}, \forall t \in [0, +\infty),
\end{align}
\end{subequations}
where $\dot{\boldsymbol{x}}(t)=f(\boldsymbol{x}(t),u(t))$ denotes the closed-loop dynamics in \eqref{Eq: Con_Dynamics}. The state constraint $\bar{I}$ describes the \textit{infection threshold} for the fraction of the infected undetected
population. In addition, the control input constraints $\underline{u}$ and $\bar{u}$ define the lower and upper bounds on the testing rates, respectively. %as following:
%  \subsection{Parameter Learning}
% In order to formulate the learning-based optimal epidemic control problem, we need to learn the model parameters $\beta$ and $\gamma$ in \eqref{Eq: Con_Dynamics}. 
% Existing literature studying control frameworks for epidemic control generates model parameters through numerical optimization methods from data sets. However, none of these works considers the influence of the parameter learning process or external control inputs during the estimation \cite{tsay2020modeling,kohler2020robust,morato2020optimal,peni2020nonlinear,carli2020model,grundel2020much,sereno2021model}.
% In order to characterize the  estimation accuracy through the learning process, and to implement online parameter estimation with the impact of control inputs,
% we will formulate the parameter estimation problem for the closed-loop SIR epidemic model as a linear regression problem \cite{draper1998applied}. We  show how to incorporate control inputs into the parameter learning process and 
% study the parameter estimation accuracy as a function of the length of the data set.
\subsection{Goals}
In this work, we assume the ranges of the model parameters and states in \eqref{Eq: Con_Dynamics}  are given via potential existing \baike{real-time modeling} and estimation techniques at any given time $t\geq0$. %with the confidence intervals. 
%In order to avoid confusion, 
We use $S(t)$, $I(t)$, $R(t)$ $\forall t\geq 0$ to denote the true susceptible, infected, and \baike{removal} states, respectively, while $\hat S(t)$, $\hat I(t)$, $\hat R(t)$ $\forall t\geq 0$ 
% to 
represent the \baike{corresponding} estimated states. %Note that due to uniformly randomly sampling, we assume that $I(t)\approx\hat I(t)$. 
Distinct from the true model parameters $\beta$ and $\gamma$, we use $\hat{\beta}(t)$ and $\hat\gamma(t)$ $\forall t\geq 0$ to represent the \baike{estimated} parameters at any given time $t\geq0$. In addition, 
%the $100(1-\alpha_{\beta\gamma})\%$ percent joint confidence intervals on the estimated parameters $\hat{\beta}(t)$, $\hat{\gamma}(t)$ $\forall t\geq 0$ are assumed to be
%$\beta \in [\hat{\beta}(t)-\tilde \beta(t),\hat{\beta}(t)+\tilde \beta(t)]$, $\gamma \in [\hat{\gamma}(t)-\tilde{\gamma}(t),\hat{\gamma}(t)+\tilde{\gamma}(t)]$;
we assume $\hat{\beta}(t)
,
% \in$ and $
\beta\in [\hat{\beta}_{\min}(t),\hat{\beta}_{\max}(t)]$; $\hat{\gamma}(t),
% \in$ and $
\gamma \in [\hat{\gamma}_{\min}(t),\hat{\gamma}_{\max}(t)]$; 
%the $100(1-\alpha_{SI})\%$ percent joint confidence intervals on the estimated parameters $\hat{S}(t)$ and $\hat{I}(t)$ $\forall t\geq 0$ are assumed to be
$\hat S(t),
% \in$ and $
S(t) \in[\hat{S}_{\min}(t),\hat{S}_{\max}(t)]$; and $\hat I(t),
% \in$ and $
I(t) \in[\hat{I}_{\min}(t),\hat{I}_{\max}(t)]$ $\forall t\geq 0$.
% Note that $\tilde \beta(t)$, $\tilde{\gamma}(t)$ are usually determined by the length of the data sets and the noise distribution of the data set, while $\tilde{S}(t)$ and $\tilde{I}(t)$ are determined by the noise distribution.
% Further, in our framework $\tilde{I}(t)$ is also determined by the testing rate $u(t)$, $\forall t\geq 0$. 
% For simplicity, let $\beta_{\min}(t)$ and $\beta_{\max}(t)$ denote $\hat{\beta}-\tilde \beta(t)$ and $\hat{\beta}+\tilde \beta(t)$ $\forall t\geq0$, respectively. Similarly, we use $\gamma_{\min}(t)$ and $\gamma_{\max}(t)$ to denote $\hat{\gamma}-\tilde \gamma(t)$ and $\hat{\gamma}+\tilde\gamma(t)$, $\hat S_{\max}(t)$ and $\hat S_{\min}(t)$ to denote  $\hat{S}(t)-\tilde{S}(t)$ and $\hat{S}(t)+\tilde{S}(t)$,  $\hat I_{\max}(t)$ and $\hat I_{\min}(t)$ to denote  $\hat{I}(t)-\tilde{I}(t)$ and $\hat{I}(t)+\tilde{I}(t)$ $\forall t\geq 0$, respectively. 
Moreover, we use $S^*(t)$, $I^*(t)$, $R^*(t)$ to represent the true states under the optimal control strategy $u^*(t)$ $\forall t\geq 0$ for the problem defined in \eqref{eq:prob}.

We  focus on the theoretical analysis of the optimal control for the epidemic mitigation problem defined in \eqref{eq:prob}, under the impact of the \baike{parameter and state uncertainties}. We study optimal control strategies of  \eqref{eq:prob} in order to
%with accurate model parameters and states. Then we will 
propose a testing strategy by leveraging the estimated model parameters and states. %for the problem in \eqref{eq:prob}.
We explore the \textit{additional control cost} by comparing the total number of tests generated from the proposed control strategy with the tests under the optimal testing strategy. We aim to show the effectiveness of the proposed testing strategy through overestimating the seriousness of the epidemic under the \baike{existence of parameter and state uncertainties}.

\section{Testing for Epidemic Mitigation }
\label{section3}
\bshee{We explore the feasibility and additional cost of the optimal control framework  proposed in Fig. \ref{fig_framework} in this section.}
%We explore the feasibility and learning cost of the learning-based control framework in this section.
% Let $\beta$, $\gamma$ denote the transmission and healing rates of the system in \eqref{Eq: Con_Dynamics}, respectively,
% \baike{[BS:define it when introducing the optimal testing strategy]and  In order to analyze the impact of the parameter learning process, we use $\hat{\beta}(t)\in [\underline{\beta}, \bar{\beta}]$ and $\hat{\gamma}(t)\in [\underline{\gamma}, \bar{\gamma}]$, $\forall t\geq0$ to represent the possible range of the estimated parameters. Let $S(t)$ and $\hat{S}(t)$ denote the true and estimated states of the system in \eqref{Eq: Con_Dynamics} under the control policy $u(t)$ generated by leveraging the learned parameters $\hat{\beta}(t)$, $\hat{\gamma}(t)$, $\hat{S}(t)$, $\hat I(t)$, respectively. Note that, by uniformly random sampling, we assume that $\hat I(t)\approx I(t)$. Hence, we use the estimated susceptible state $\hat I(t)$ as the true susceptible state $I(t)$.}
\subsection{Feasibility and the Optimal Testing Strategy}
We first study the optimal control framework in \eqref{eq:prob} under accurate model parameters and states. %without considering the impact of parameter learning processes. 
%To avoid confusion, 
Let $t=0$ denote the very beginning
of an epidemic, and $t_p$ denote the time when the infection state reaches the peak value during the epidemic spreading process, i.e., $I(t_p)\geq I(t)$ $\forall t\geq 0$. The following lemma characterizes the peak value $I(t_p)$ %of the system 
in $\eqref{Eq: Con_Dynamics}$.

\begin{lemma}
\label{lem:Ip}
Starting from $\boldsymbol{x}(t_a)=[S(t_a)\quad I(t_a)\quad R(t_a)]^{\top}$ and $u(t_a)=\underline{u}$ at time $t_a<t_p$, if the system in \eqref{Eq: Con_Dynamics} under the fixed control input $u(t)=\underline{u}$ reaches a peak infection value $I(t_p)$,  we have $I(t_p) = \rho (\ln\rho -1- \ln S(t_a))+S(t_a)+I(t_a)$, where $\rho=\frac{\gamma+\underline{u}}{\beta}$.
\end{lemma}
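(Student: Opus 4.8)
The plan is to use the classical phase-plane argument for SIR dynamics: eliminate time by viewing $I$ as a function of $S$ along the trajectory generated by the constant input $u(t)=\underline{u}$ on $[t_a,t_p]$, integrate the resulting separable relation, and then evaluate it at the peak, where the time derivative of $I$ vanishes.

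First I would record the qualitative facts needed to make the change of variable legitimate. On $[t_a,t_p]$ the solution of \eqref{Eq: Con_Dynamics} with $u\equiv\underline{u}$ has $I(t)>0$ and $S(t)>0$, so by \eqref{eq:S_u} we have $\dot S(t)=-\beta S(t)I(t)<0$; hence $t\mapsto S(t)$ is strictly decreasing, and therefore invertible, on $[t_a,t_p]$, which lets us regard $I$ as a function of $S$. Dividing \eqref{eq:I_u} by \eqref{eq:S_u} gives
\[
\frac{dI}{dS}=\frac{\beta S I-(\gamma+\underline{u})I}{-\beta S I}=-1+\frac{\rho}{S},\qquad \rho=\frac{\gamma+\underline{u}}{\beta}.
\]
Integrating this from $S(t_a)$ to a generic $S$ yields
\[
I = I(t_a)+\bigl(S(t_a)-S\bigr)+\rho\bigl(\ln S-\ln S(t_a)\bigr).
\]

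Next I would characterize the peak. Since $t_p$ is an interior time ($t_a<t_p$) at which $I$ attains a maximum, $\dot I(t_p)=0$; as $I(t_p)>0$, \eqref{eq:I_u} forces $\beta S(t_p)=\gamma+\underline{u}$, i.e.\ $S(t_p)=\rho$. Substituting $S=\rho$ into the integrated relation and collecting terms via $-\rho+\rho\ln\rho=\rho(\ln\rho-1)$ gives
\[
I(t_p)=\rho\bigl(\ln\rho-1-\ln S(t_a)\bigr)+S(t_a)+I(t_a),
\]
which is the claimed formula.

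The only delicate points are structural rather than computational: (i) justifying that $S(\cdot)$ is invertible on $[t_a,t_p]$ so the passage to $I$-as-a-function-of-$S$ is valid — this follows from strict positivity of $I$ there; and (ii) arguing that the peak is an interior stationary point so that $\dot I(t_p)=0$ applies (implicitly this requires $S(t_a)\ge\rho$, since otherwise $\dot I(t_a)\le 0$ and the peak would be at $t_a$, contradicting $t_a<t_p$). Both are mild and are guaranteed by the hypothesis that a peak $I(t_p)$ with $t_a<t_p$ is actually attained; everything else is routine integration and algebra.
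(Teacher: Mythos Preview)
Your proposal is correct and follows essentially the same approach as the paper: divide \eqref{eq:I_u} by \eqref{eq:S_u} to obtain $dI/dS=\rho/S-1$, integrate from the initial condition, and evaluate at the peak where $\dot I(t_p)=0$ forces $S(t_p)=\rho$. Your additional remarks on the invertibility of $S$ and the peak being an interior stationary point are welcome rigor that the paper leaves implicit, but the argument is otherwise identical.
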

\begin{proof}
Consider \eqref{Eq: Con_Dynamics} $\forall t\geq t_a$, dividing \eqref{eq:I_u} by  \eqref{eq:S_u} \bshee{gives} $dI(t)/dS(t) = (\gamma+u(t))/\beta S(t)-1$.
% \begin{equation*}
%     \frac{dI(t)}{dS(t)} = \frac{\gamma+u(t)}{\beta S(t)}-1.
% \end{equation*}
%Note that $I(t)>0$ $\forall t\in[t_a, t_p]$. %during the epidemic mitigation analysis. %for the epidemic mitigation analysis.
Then, we integrate the equation with respect to $S(t)$ and apply the initial conditions $\boldsymbol{x}(t_a)=[S(t_a)\quad I(t_a)\quad R(t_a)]^{\top}$ and $u(t_a)=\underline{u}$. Then by fixing $u(t)=\underline{u}$, we obtain
% \begin{equation*}
%     I(t) = \frac{\gamma+u(t)}{\beta} \ln S(t) -S(t)-\frac{\gamma+\underline{u}}{\beta} \ln S(t_a)+S(t_a)+I(t_a), \forall t\geq t_a.
%     \vspace{-2ex}
% \end{equation*}
% By fixing $u(t)=\underline{u}$, we have
\begin{equation*}
    I(t) = \frac{\gamma+\underline{u}}{\beta} \ln S(t) -S(t)-\frac{\gamma+\underline{u}}{\beta} \ln S(t_a)+S(t_a)+I(t_a), 
\end{equation*}
$\forall t\geq t_a$.
 %even if the prediction horizon of the MPC is long enough to foresee the outbreak.
From % Eq. 
\eqref{eq:I_u},
%and the property of the SIR model, 
the infected population at $t_p$ satisfies $\frac{dI(t_p)}{dt}=\beta S(t_p)I(t_p)-(\gamma+\underline{u})I(t_p)=0$, and $I(t_p)\neq0$. Hence, we have $S(t_p)=\frac{\gamma+\underline{u}}{\beta}$ at $t_p$. 
By evaluating $I(t)$ at $t_p$ and substituting in $S(t_p)=\frac{\gamma+\underline{u}}{\beta}=\rho$, we have
% Substituting $S(t_p)$ and $I(t_p)$ into 
% % Eq. 
% \eqref{eq:I_SS},
% \begin{equation}
%  \label{eq:I_S}
%     I(t_p) = \frac{\gamma+\bar{u}}{\beta} \ln{\frac{\frac{\gamma+\bar{u}}{\beta}}{S(t_0)}} -\frac{\gamma+\bar{u}}{\beta}+S(t_0)+I(t_0).\\
% \end{equation}
% Define $\rho=\frac{\gamma+\bar{u}}{\beta}$, we have 
%\begin{equation}
 %\label{eq:I_tp}
$I(t_p) = \rho \ln\rho -\rho-\rho \ln S(t_a)+S(t_a)+I(t_a).$
%\end{equation}
%Since the peak infection value under the MPC is always be higher than $I(t_p)$ under the control settings. 
%Hence, we have the lemma proved.
\end{proof}
Lemma \ref{lem:Ip}
\bshee{calculates}
the peak infection value $I(t_p)$ from any initial condition $\boldsymbol{x}(t_a)$ and $u(t_a)$ before $t_p$, under the fixed control input $u(t)=\underline{u}$ $\forall t\geq0$. %along with the transmission rate $\beta$ and the removal rate $\gamma$. 
Note that if $\underline{u}=\bar{u}=0$, Lemma \ref{lem:Ip} characterizes the peak infection value for the classic SIR model.
%under the fixed control input $\underline{u}$. %along with  the transmission rate $\beta$, the removal rate $\gamma$. 
%In addition, Lemma \ref{lem:Ip} builds the connection between the control input $\underline{u}$ and the peak infection value $I(t_p)$, given by the following corollary.
\begin{corollary}
\label{Lem: cor}
Assume the closed-loop system in \eqref{Eq: Con_Dynamics} starts from $\boldsymbol{x}(t_a)=[S(t_a)\quad I(t_a)\quad R(t_a)]^{\top}$ and $u(t_a)=\underline{u}$ at time $t_a$.
If $\exists$ $t_{p}$ s.t. $I(t_{p})\geq I(t)$ $\forall t\geq t_a$, 
the peak infection value $I(t_{p})$ will increase as $\beta$ increases; decrease as $\gamma$ increases; and decrease as $\underline{u}$ increases.
\end{corollary}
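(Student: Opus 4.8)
The plan is to treat the closed-form peak value supplied by Lemma \ref{lem:Ip} as a scalar function of the single composite parameter $\rho = (\gamma + \underline{u})/\beta$, holding the initial data $S(t_a)$ and $I(t_a)$ fixed, and then to differentiate. Writing $g(\rho) := \rho(\ln\rho - 1 - \ln S(t_a)) + S(t_a) + I(t_a)$ so that $I(t_p) = g(\rho)$, a one-line computation gives $g'(\rho) = \ln\rho + 1 - 1 - \ln S(t_a) = \ln(\rho/S(t_a))$.

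Next I would establish that $g'(\rho) < 0$ on the range of interest. The key fact is that along \eqref{Eq: Con_Dynamics} with $u(t) = \underline{u}$ the susceptible fraction is strictly decreasing while $I(t) > 0$, since $\dot S(t) = -\beta S(t) I(t) < 0$; because $t_a < t_p$ this forces $S(t_a) > S(t_p)$, and the proof of Lemma \ref{lem:Ip} identifies $S(t_p) = \rho$. Hence $0 < \rho < S(t_a)$, so $g'(\rho) = \ln(\rho / S(t_a)) < 0$, i.e. $I(t_p)$ is strictly decreasing in $\rho$ throughout the regime in which a genuine peak exists.

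Finally I would close the argument by the chain rule, using the elementary monotonicities $\partial\rho/\partial\beta = -(\gamma+\underline{u})/\beta^2 < 0$ and $\partial\rho/\partial\gamma = \partial\rho/\partial\underline{u} = 1/\beta > 0$. These give $\partial I(t_p)/\partial\beta = g'(\rho)\,\partial\rho/\partial\beta > 0$, while $\partial I(t_p)/\partial\gamma$ and $\partial I(t_p)/\partial\underline{u}$ each equal $g'(\rho)/\beta < 0$, which are exactly the three claimed statements.

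I do not expect a genuine obstacle here; the only point needing care is the standing hypothesis that the peak $t_p$ exists, which is equivalent to $S(t_a) > \rho$ (otherwise $I$ is already non-increasing at $t_a$). Within that regime the inequality $g'(\rho) < 0$ is strict, so the monotonicity conclusions are unambiguous; one just restricts attention to parameter values for which the corollary's hypothesis holds, noting that increasing $\beta$ or decreasing $\gamma$ or $\underline{u}$ only pushes $\rho$ further below $S(t_a)$.
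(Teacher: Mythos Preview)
Your proposal is correct and follows essentially the same route as the paper: both express $I(t_p)$ as a function of $\rho=(\gamma+\underline{u})/\beta$ via Lemma~\ref{lem:Ip}, compute $g'(\rho)=\ln(\rho/S(t_a))$, argue that $\rho<S(t_a)$ (the paper via $dI/dt>0$ on $[t_a,t_p)$, you via $S$ decreasing and $S(t_p)=\rho$), and then read off the monotonicities in $\beta$, $\gamma$, $\underline{u}$. Your explicit chain-rule bookkeeping is slightly more detailed than the paper's, but the argument is the same.
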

\begin{proof}
Consider $I(t_{p})$ as a function of $\rho$ in Lemma \ref{lem:Ip}. Since $I(t_p)$ is the peak infection value during the epidemic spreading process, and $t_p>t_a$, then we have $\frac{dI(t)}{dt}>0$ $\forall t\in [t_a, t_p)$. %and $\frac{dI(t)}{dt}<0$, $\forall t>t_p$.
From \eqref{eq:I_u}, we have $\frac{\gamma+\underline{u}}{\beta S(t)}< 1$, $\forall t\in[t_a, t_{p})$. %indicates that $\frac{\gamma+\underline{u}}{\beta}< 1$, $\forall t\in[t_a, t_{p})$, due to $S(t)<1$, $\forall t>0$. 
Define function $g(\rho)=\rho \ln\rho-\rho-\rho \ln S(t_a)$, $\frac{\rho}{S(t)}\in  (0,1)$, $\forall t\in[t_a, t_{p})$. We can obtain that the first derivative
$g'(\rho)=\ln\frac{\rho}{S(t_a)} 
% - 2
<0$, since $\frac{\rho}{S(t_a)}\in (0,1)$. Therefore,
$g(\rho)$ is monotonically decreasing with respect to $\rho$, and thus $g(\rho)$ is monotonically decreasing with respect to $\underline{u}$. Furthermore, $I(t_p)$ is monotonically decreasing with respect to $\gamma$ and $\underline{u}$, and monotonically increasing with respect to $\beta$, $\forall t\in [t_a, t_p)$. Hence, we complete the proof.
\end{proof}

Corollary \ref{Lem: cor} implies that, %starting from 
under the same initial conditions, 
the peak infection value $I(t_p)$ will decrease with higher $\beta$ and/or lower $\gamma$. 
%The connection lays of foundation for studying the impact of inaccurate parameters learned from previous sections on the peak infection value of the system in \eqref{Eq: Con_Dynamics}. 
Further, %if we consider $\underline{u}$ as the lower bound on the testing rate, 
Corollary~\ref{Lem: cor} states that increasing the lower bound on the testing rate $\underline{u}$ will lower the peak infection value. Hence, if $\underline{u}$ in $\eqref{eq:prob}$ is sufficiently high, such that $I(t_p)\leq \bar{I}$ when $u(t)=\underline{u}$ $\forall t\geq 0$,
the optimal control strategy will be $u(t)=\underline{u}$ $\forall t\geq 0$. %Thus, we have the following lemma. 

\begin{corollary}[Optimal Testing Strategy 1]
\label{lem:opt1}
The optimal testing strategy for the problem in \eqref{eq:prob} is $u^*(t)=\underline{u}$  $\forall t\geq 0$, if $I^*(t_p) = \rho (\ln\rho -1- \ln S^*(0))+S^*(0)+I^*(0)\leq \bar{I}$.
\end{corollary}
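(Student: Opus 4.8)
The plan is to split the claim into two parts: (i) the constant input $u(t)\equiv\underline{u}$ is feasible for \eqref{eq:prob} whenever the stated inequality holds, and (ii) among all feasible inputs this one attains the minimum of $J$. Part (ii) is essentially immediate from the control constraint in \eqref{eq:constraints}: every admissible $u$ satisfies $u(t)\ge\underline{u}$ for every $t$, so the integrand in $J(u)=\int_0^{+\infty}u(t)\,dt$ is pointwise no smaller than that of $J(\underline{u})$, and hence $J(u)\ge J(\underline{u})$ for all feasible $u$. (If one wishes to avoid comparing possibly divergent integrals, the same domination holds on every finite horizon $[0,T]$ and then passes to the limit.) Consequently, once feasibility of $u\equiv\underline{u}$ is established, it is optimal, and in fact the unique minimizer up to modification on a null set, since strict inequality $u(t)>\underline{u}$ on a positive-measure set raises the cost strictly.

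For part (i), I would first observe that the constraint $I(t)\ge0$ is automatic, because the nonnegative orthant is invariant for \eqref{Eq: Con_Dynamics} ($\dot I=0$ when $I=0$). It therefore suffices to show $I(t)\le\bar I$ for all $t\ge0$ along the trajectory of \eqref{Eq: Con_Dynamics} with $u(t)\equiv\underline{u}$ started from the actual initial state, for which $S(0)=S^*(0)$ and $I(0)=I^*(0)$. Taking $t_a=0$ in the first integral derived in the proof of Lemma~\ref{lem:Ip}, the closed-loop trajectory satisfies $I(t)=h(S(t))-h(S^*(0))+I^*(0)$ with $h(s):=\rho\ln s-s$ and $\rho=(\gamma+\underline{u})/\beta$. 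Since $\dot S<0$ makes $S(t)$ strictly decreasing and $h$ is strictly concave with unique maximizer $s=\rho$, the map $t\mapsto h(S(t))$ attains its supremum over $t\ge0$ either at the instant where $S(t)=\rho$ (the case $S^*(0)>\rho$, in which the interior peak $t_p$ of Lemma~\ref{lem:Ip} exists) or at $t=0$ (the case $S^*(0)\le\rho$, in which $I$ is nonincreasing from the start). In the first case $\max_{t\ge0}I(t)=h(\rho)-h(S^*(0))+I^*(0)=\rho(\ln\rho-1-\ln S^*(0))+S^*(0)+I^*(0)=I^*(t_p)$; in the second case $\max_{t\ge0}I(t)=I^*(0)\le h(\rho)-h(S^*(0))+I^*(0)=I^*(t_p)$ because $h(\rho)\ge h(S^*(0))$. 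Either way the hypothesized bound $I^*(t_p)\le\bar I$ forces $I(t)\le\bar I$ for all $t$, so $u\equiv\underline{u}$ is feasible, and together with part (ii) this yields $u^*(t)=\underline{u}$ for all $t\ge0$.

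I expect the feasibility step to be the only real obstacle: one must recognize that the quantity appearing in Lemma~\ref{lem:Ip} is not merely the value at a (possibly nonexistent) peak but an upper bound on $I(t)$ along the \emph{entire} trajectory, which hinges on pairing the monotone decay of $S(t)$ with the concavity/monotonicity of $h$ on $(0,\rho]$, and on separately treating the degenerate case $S^*(0)\le\rho$ where no interior peak exists. The optimality step is routine once the convention for comparing the cost integrals is fixed.
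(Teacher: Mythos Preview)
Your proof is correct and follows essentially the route the paper intends: the paper omits the proof entirely, declaring the corollary a direct consequence of Lemma~\ref{lem:Ip} (the peak formula) and Corollary~\ref{Lem: cor}, so any argument that extracts feasibility from the peak value and optimality from the pointwise constraint $u(t)\ge\underline{u}$ matches it. Your treatment is in fact more careful than the paper's sketch, since you explicitly handle the degenerate case $S^*(0)\le\rho$ (where no interior peak exists and Lemma~\ref{lem:Ip} does not literally apply) via the concavity of $h$, and you flag the divergent-integral issue that the paper's formulation with $\underline{u}>0$ creates.
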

Corollary~\ref{lem:opt1} is a direct result from Lemma \ref{lem:Ip} and Corollary~\ref{Lem: cor}, thus the proof is omitted. For the optimal control problem in \eqref{eq:prob}, if there is no risk for the infection state to exceed the infection threshold $\bar{I}$, %(i.e., the framework is always feasible), 
maintaining the testing at $\underline{u}$ is the best way to reduce the cost. For the control framework in \eqref{eq:prob}, %if the peak infection value $I(t_{peak})\leq \bar{I}$, under the lower bound on the testing rate $\underline{u}$, the optimal control framework will have $u(t)=\underline{u}$, $\forall t>0$. Hence, 
we consider the case \bshee{when} $I(t_p)> \bar{I}$ under $u(t)=\underline{u}$, $\forall t\geq 0$, %such that the control framework need to adjust the testing rate away from $\underline{u}$ during the epidemic spreading in order to ensure the system in \eqref{Eq: Con_Dynamics} satisfies the constraints in \eqref{eq:prob}. Under such condition, we 
and develop the following theorem to study the feasibility of the framework in \eqref{eq:prob}.
%if Corollary \ref{lem:opt1} does not hold.

\begin{theorem}
\label{Thm:feasibility}
Starting from $t=t_a\geq 0$, if $\exists t_b\geq t_a$ s.t. $I(t_b)=\bar{I}$ for the first time, then
the control framework in \eqref{eq:prob} is feasible if and only if $\exists u(t_b)\in(\underline{u}, \bar{u}]$ s.t. $u(t_b)=\beta S(t_b)-\gamma$.
\end{theorem}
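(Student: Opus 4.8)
The plan is to reduce feasibility to a single pointwise condition at $t_b$ via the sign of $\dot I$, and then, for the converse, to patch together an explicit globally feasible control. At the outset I would fix the convention (consistent with the surrounding discussion that $I(t_p)>\bar I$ under $u\equiv\underline u$) that the control equals $\underline u$ on $[t_a,t_b]$, so that $t_b$ — ``the first time $I(t)=\bar I$'' — and the value $S(t_b)$ appearing in the statement are unambiguous.

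For the ``only if'' direction: since $I(t_p)>\bar I$, the trajectory run under $\underline u$ must rise strictly above $\bar I$ after $t_b$, so $\dot I(t_b)>0$ under $\underline u$ — the borderline case $\dot I(t_b)=0$ would force $S(t_b)=\rho$, hence $t_b=t_p$ and $I(t_p)=\bar I$, a contradiction. Evaluating \eqref{eq:I_u} at $t_b$ with $I(t_b)=\bar I>0$ then gives $\beta S(t_b)-\gamma>\underline u$. Now suppose \eqref{eq:prob} is feasible; then some admissible $u(\cdot)$ keeps $I(t)\le\bar I$ for all $t$, so at the threshold-touching time $t_b$ the right derivative of $I$ is non-positive, which by \eqref{eq:I_u} means $u(t_b)\ge\beta S(t_b)-\gamma$; together with $u(t_b)\le\bar u$ this yields $\beta S(t_b)-\gamma\le\bar u$. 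Hence $\beta S(t_b)-\gamma\in(\underline u,\bar u]$, i.e. $u(t_b):=\beta S(t_b)-\gamma$ is an admissible control value, which is exactly the claimed condition.

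For the ``if'' direction: given $\beta S(t_b)-\gamma\in(\underline u,\bar u]$, I would exhibit a feasible control. Keep $u=\underline u$ on $[t_a,t_b]$, where $I(t)<\bar I$ by definition of $t_b$. For $t\ge t_b$, set $u(t)=\beta S(t)-\gamma$ as long as this is $\ge\underline u$, and $u(t)=\underline u$ thereafter. In the first mode \eqref{eq:I_u} gives $\dot I(t)=0$, so $I$ stays pinned at $\bar I$; meanwhile $\dot S=-\beta S\bar I<0$ by \eqref{eq:S_u}, so $S$ is strictly decreasing, which makes $\beta S(t)-\gamma$ strictly decreasing and hence $\le\beta S(t_b)-\gamma\le\bar u$ throughout — so this control is admissible. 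This mode ends at some finite $t_c$ with $S(t_c)=\rho$ (or, in the degenerate case $\gamma+\underline u=0$, never, in which case $S\to0$ exponentially and $I\equiv\bar I$); for $t>t_c$, under $u=\underline u$ we have $S(t)<\rho$, so $\dot I(t)<0$ and $I(t)<\bar I$. Thus $I(t)\le\bar I$ for all $t$, the control stays in $[\underline u,\bar u]$, and $\int u\,dt$ is finite (the first-mode part equals $\int_{t_b}^{t_c}(\beta S-\gamma)\,dt$ with $S$ decaying exponentially), so \eqref{eq:prob} is feasible.

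I expect the main obstacle to be the ``if'' direction: one must verify that the feedback $u(t)=\beta S(t)-\gamma$ that pins $I$ to the threshold is admissible along the entire trajectory it generates — which rests on the monotonicity $\dot S\le 0$ from \eqref{eq:S_u} to guarantee $u(t)\le\beta S(t_b)-\gamma\le\bar u$ — and that switching to $\underline u$ at $t_c$ does not let $I$ rebound above $\bar I$, which follows from $S(t_c)=\rho$ together with $S$ continuing to strictly decrease (the same mechanism underlying Lemma~\ref{lem:Ip} and Corollary~\ref{Lem: cor}).
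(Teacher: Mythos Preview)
Your proof is correct and follows essentially the same line as the paper's: both directions hinge on reading the sign of $\dot I(t_b)$ from \eqref{eq:I_u}, and for sufficiency both use the strict monotonicity of $S$ from \eqref{eq:S_u} to guarantee that a control with $u(t)\ge\beta S(t)-\gamma$ remains admissible for all $t\ge t_b$. Your version is more explicit than the paper's in a few places --- you construct the concrete feedback $u(t)=\max\{\underline u,\beta S(t)-\gamma\}$ (which is exactly the optimal strategy of Proposition~\ref{Prop:policy}) rather than merely asserting existence, you argue the lower bound $\beta S(t_b)-\gamma>\underline u$ directly from the standing assumption $I(t_p)>\bar I$, and you check finiteness of the cost --- but the underlying argument is the same.
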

\begin{proof}
Consider the system in \eqref{Eq: Con_Dynamics} before reaching $t_b$, we have $I(t)<\bar{I}$, $\forall t\in [0, t_b)$. %under the control policy $u(t)=\underline{u}$.
Hence, the system is feasible $\forall t\in [0, t_b)$. Then we study the system starting from $t_b$.\\
$\Longleftarrow$: Under the condition that $\exists t_b\geq t_a$ s.t. $I(t_b)=\bar{I}$ for the first time, if $\exists u(t_b)\in(\underline{u}, \bar{u}]$ s.t. $u(t_b)=\beta S(t_b)-\gamma$, from 
\eqref{eq:I_u},
we have $\frac{dI(t_b)}{dt}=0$. 
Furthermore, since $S(t)$ is strictly monotonically decreasing unless $S(t)=0$ and/or $I(t)=0$  $\forall t\geq0$, we can always find a $u(t)\in [u(t_b), \bar{u}]$, such that
$u(t)\geq \beta S(t)-\gamma$ $\forall t\geq t_b$. %we can find $u(t)\leq u(t_s)$, $u(t)\in (\underline{u}, \bar{u}]$ to ensure $\frac{dI(t)}{dt}\leq 0$, and thus
From \eqref{eq:I_u}, there  always exists a $u(t)\in  [u(t_b), \bar{u}]$ such that $\frac{dI(t)}{dt}\leq 0$ $\forall t\geq t_b$, which guarantees
$I(t)\leq \bar{I}$ $\forall t\geq t_b$. Therefore, the control framework in \eqref{eq:prob} is feasible.\\
$\Longrightarrow$: Starting from $t=t_a\geq 0$, $\exists t_b>t_a$ s.t. $I(t_b)=\bar{I}$ for the first time. If
the system is feasible, $I(t)$ must stop increasing at $t_b$. Hence, $\frac{dI(t_b)}{dt}\leq 0$ indicates that there must exist $u(t_b)\in (\underline{u},\bar{u}]$ such that $u(t_b)=\beta S(t_b)-\gamma$, which completes the proof.
% First, we consider the first possible case where $\frac{u(t_s)+\gamma}{\beta S(t_s)}>1$. Under the condition that  $\frac{u(t_s)+\gamma}{\beta S(t_s)}>1$, we have $\frac{dI(t)}{dt}>0$, at $t_s$. Hence, there must exist a short time $\delta$, such that $I(t_s+\delta)>\bar{I}$. Therefore, the condition that the MPC framework is violated. Consider the case where $\frac{u(t_s)+\gamma}{\beta S(t_s)}<1$. Following a similar analysis, we must have $\frac{dI(t)}{dt}<0$ at $t_s$. However, $\frac{dI(t)}{dt}<0$ will cost more compared to  $\frac{dI(t)}{dt}=0$. Therefore, we must have $\frac{u(t_s)+\gamma}{\beta S(t_s)}=1$.
\end{proof}
%Theorem \ref{Thm:feasibility} states that to ensure the control system is feasible, we need to check if there is a control input $u(t_b)\in (\underline{u}, \bar{u}]$ at the time step $t_b$, such that $u(t_b)= \beta S(t_b)-\gamma$. %The property is from the monotonicity of the susceptible population. 
% In order to focus on analyzing the optimal control strategy proposed in Proposition \ref{Prop:policy}, 
In this work, we study the case that satisfies Theorem \ref{Thm:feasibility}:
%but not in Corollary \ref{lem:opt1}: 
%the lower bound on the testing rate $\underline{u}$ is not overcommitted such that there exist $I(t)$ such that $I(t)\geq\bar{I}$;
the upper bound on the testing rate $\bar{u}$ is \bshee{sufficiently large} such that we can always find a $u(t_b)\in(\underline{u}, \bar{u}]$, to satisfy $u(t_b)= \beta S(t_b)-\gamma$. Under such condition, the optimal testing strategy is given by the following proposition, \bks{where $a^*$, $a\in \{S,I,R,t_b,t_h\}$ represents the state or the time step of the system in \eqref{Eq: Con_Dynamics}
under the optimal control strategy $u^*(t)$.
Note that $t^*_b$ is the time step when $I^*(t)$ $t\geq 0$ reaches $\bar{I}$ under the optimal testing strategy $u^*(t)$ for the first time.
In addition, $t^*_h$ is time step  when the epidemic reaches herd immunity  under the optimal testing strategy $u^*(t)$ for the first time, i.e., $\frac{dI(t^*_h)}{dt}=(\beta S(t^*_h)-(\gamma+\underline{u}))I(t^*_h)=0$. Furthermore, we have $\frac{dI(t^*_h)}{dt}\leq 0$, $\forall u(t)\in[\underline{u}, \bar{u}]$, $\forall t\geq t^*_h$.}  
% Although testing rates satisfy Theorem \ref{Thm:feasibility} guarantee feasibility of the problem, recall from the problem formulation, the optimal control policy can only be established when $I(t_b)=I(t_p)=\bar{I}$. If the peak infection value of the epidemic spreading process under $\bar{u}$ is smaller than $\bar{I}$, the optimal testing strategy will be $u(t)=\underline{u}$, $\forall t\geq 0$. Thus, we analyze the cases where $u*(t)=\underline{u}$, $\forall t\geq 0$.
% The solution for the optimization problem gives the optimal testing strategy $u(t)$ is given by the following result.
\begin{proposition}[Optimal Testing Strategy 2]
\label{Prop:policy}
\cite[Theorem 1]{acemoglu2021optimal}
The optimal testing strategy for the problem in \eqref{eq:prob} can be cast into three stages:
\begin{enumerate}
    \item At the early stage of the epidemic,  when $I^*(t)<\bar{I}$, $\forall t\in[0, t^*_b)$, $u^*(t)=\underline{u}$;
    \item During the outbreak, starting from $I^*(t^*_b)=\bar{I}$, $\forall t\in[t^*_b, t^*_h)$, $u^*(t)=\beta S^*(t)-\gamma$;
    \item When the epidemic reaches herd immunity at $t^*_h$, i.e., $\beta S^*(t^*_h)=\gamma+\underline{u}$, $\forall t\geq t^*_h$, $u(t)=\underline{u}$.
\end{enumerate}
\end{proposition}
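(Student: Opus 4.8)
The plan is to establish the optimality of the three-stage policy by a standard argument from optimal control: first identify that any feasible control must do at least as much "testing work" as the candidate, then verify the candidate is feasible, and conclude it is optimal because it coincides with the trivial lower bound $\underline{u}$ except on the minimal interval where the state constraint $I(t)\le\bar I$ is active. Concretely, I would begin by invoking Theorem~\ref{Thm:feasibility} together with the standing assumption that $\bar u$ is large enough, so that a feasible control exists and, in particular, the candidate policy is well-defined: in Stage~2 the prescribed $u^*(t)=\beta S^*(t)-\gamma$ lies in $(\underline u,\bar u]$ because $I^*(t_b^*)=\bar I$ means $\dot I=0$ requires exactly this value, and monotonic decrease of $S(t)$ keeps it below $\bar u$ thereafter while Corollary~\ref{Lem: cor}/Lemma~\ref{lem:Ip} guarantee $\beta S^*(t)-\gamma>\underline u$ precisely until herd immunity $t_h^*$.

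Next I would argue feasibility of the candidate. On $[0,t_b^*)$ we have $I^*(t)<\bar I$ by definition of $t_b^*$. On $[t_b^*,t_h^*)$, $u^*(t)=\beta S^*(t)-\gamma$ forces $\dot I^*(t)=0$, so $I^*(t)\equiv\bar I$, which satisfies the constraint with equality; one must also check $S^*(t)$ still evolves (it strictly decreases since $I^*=\bar I>0$), so $t_h^*$ is reached in finite time. On $[t_h^*,\infty)$, setting $u^*(t)=\underline u$ gives $\dot I^*(t)=(\beta S^*(t)-\gamma-\underline u)I^*(t)$, and since $\beta S^*(t_h^*)=\gamma+\underline u$ and $S^*$ is nonincreasing, $\beta S^*(t)-\gamma-\underline u\le 0$ for all $t\ge t_h^*$, so $I^*$ is nonincreasing and stays at or below $\bar I$. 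Hence the candidate is feasible.

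For optimality, I would compare the candidate with an arbitrary feasible control $u(\cdot)$ having trajectory $(S,I,R)$. The cost is $J(u)=\int_0^\infty u(t)\,dt=\int_0^\infty(\gamma+u(t))I(t)/I(t)\,dt$; more usefully, integrating $\dot R=(\gamma+u)I$ shows $\int_0^\infty (\gamma+u(t))I(t)\,dt = S(0)+I(0)-S_\infty$ where $S_\infty=\lim_{t\to\infty}S(t)$, and the terminal susceptible level $S_\infty$ for a feasible trajectory is bounded above by $S^*_\infty$ because keeping $I$ capped at $\bar I$ (rather than below it) is exactly what allows $S$ to decay the least. I would make this precise by showing that among feasible controls the candidate maximizes $S(t)$ pointwise — intuitively, using more testing than $u^*$ before the peak only depresses $I$ further and lets $S$ decay more — and then translate the pointwise ordering of $S$ into the ordering of total tests via the conservation identity, using that $\int u\,I$ and $\int u$ are linked through $I\le\bar I$. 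The cleanest route is: any feasible $u$ satisfies $u(t)\ge\beta S(t)-\gamma$ whenever $I(t)=\bar I$ (to prevent $\dot I>0$) and $u(t)\ge\underline u$ always; I would show the candidate attains these lower bounds everywhere and that its $S$-trajectory dominates, then close the comparison.

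The main obstacle I anticipate is the final comparison step: making rigorous the claim that deviating from $u^*$ cannot reduce total cost. The subtlety is that $J=\int u\,dt$ is not simply monotone in the pointwise value of $u$ along different trajectories, because a larger $u$ on some interval changes $I$ and $S$ downstream, which changes the integrand's effective weighting. One must rule out a "spend more now, save more later" trade-off. I expect this is handled either by an exchange/variational argument (perturbing $u^*$ and showing the first-order cost change is nonnegative, exploiting that the Stage~2 control is the unique one holding the active constraint and Stages~1,3 sit at the box lower bound so no admissible decrease exists) or by citing the Pontryagin-based derivation in \cite{acemoglu2021optimal}, since the proposition is attributed to \cite[Theorem 1]{acemoglu2021optimal}; accordingly, the proof here can be kept short, verifying that our formulation \eqref{eq:prob} matches the hypotheses of that theorem and that the three-stage structure specializes correctly under our assumptions on $\underline u,\bar u$.
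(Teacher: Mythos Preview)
The paper does not supply its own proof of this proposition: immediately after the statement it writes that ``the proof of Proposition~\ref{Prop:policy} is the same as the proof of \cite[Theorem~1]{acemoglu2021optimal}, although the lower bound on the testing rate is $\underline{u}=0$ in \cite[Theorem~1]{acemoglu2021optimal}.'' Your final paragraph anticipates precisely this option, so in that sense your proposal matches what the paper actually does.

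Since you also sketch a self-contained argument, one remark on it: the $S_\infty$ comparison you propose has the inequality in the wrong direction. Holding $I$ at the cap $\bar I$ on $[t_b^*,t_h^*)$ makes $\dot S=-\beta S I$ as negative as feasibility allows, so the optimal trajectory has $S^*(t)\le S(t)$ for any feasible competitor on that interval (this is exactly what the paper later proves in Lemma~\ref{lem:inaccu} and Theorem~\ref{Thm:Bound}), and hence $S^*_\infty$ is a \emph{lower} bound, not an upper bound, on terminal susceptibles. Consequently the conservation identity $\int_0^\infty(\gamma+u)I\,dt=S(0)+I(0)-S_\infty$ does not by itself give the cost ordering you want; the actual argument in \cite{acemoglu2021optimal} proceeds via Pontryagin's principle and a direct comparison of $\int u\,dt$ (their Lemma~5, cited later in this paper), not through $S_\infty$. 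Your feasibility verification and your identification of the obstacle in the optimality step are both accurate, but the fix is the variational/Pontryagin route you mention at the end rather than the $S_\infty$ route.
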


%since we cannot lower $u(t)$, $\forall t\geq t_h$, and $S(t)$ is monotonically decreasing.
%In \cite[Theorem 1]{acemoglu2021optimal}, the lower bound on the testing rate $\underline{u}=0$. 
The proof of Proposition \ref{Prop:policy} is the same as the proof of \cite[Theorem 1]{acemoglu2021optimal}, although the lower bound on the testing rate is $\underline{u}=0$ in \cite[Theorem 1]{acemoglu2021optimal}. Proposition \ref{Prop:policy} separates the testing strategy into three stages via considering the first time when the infection state reaches $\bar{I}$, i.e., $t^*_b$, and the herd immunity time step $t^*_h$ as the switching time steps. In the following subsection, we aim to explore testing strategies under the guidance of the optimal testing strategy in Proposition \ref{Prop:policy}, \baike{with parameter and state uncertainties.}

\subsection{Testing Strategy under \baike{Uncertainties}}
In this subsection, we propose a testing strategy for the problem in \eqref{eq:prob} with \baike{parameter and state uncertainties captured by}
the ranges given in \bshee{Section \ref{section2}}. \bshee{Recall that we define $\hat X(t)$, $X\in \{S,I,R\}$, $\forall t\geq 0$ as the estimated states. We use $\hat{t}_b$ to denote the time step when the overestimated state $\hat I_{\max} (t)$ reaches the infection threshold $\bar{I}$ for the first time. In addition, we use $\hat{t}_h$ to represent the time step  when  $\hat{\beta}_{\max}(\hat{t}_h) \hat{S}_{\max}(\hat{t}_h)=\hat{\gamma}_{\min}(\hat{t}_h)+\underline{u}$ for the first time, i.e., the computed herd immunity time step by overestimating the epidemic states and spreading parameters. We use $\hat{u}(t)$ $\forall t\geq 0$ to represent the generated testing strategy by leveraging the overestimated epidemic spreading process and the corresponding computed time steps $\hat t_h$ and $\hat t_b$.}

\begin{definition}[Testing Strategy under \baike{Uncertainties}]
The testing strategy for the problem in \eqref{eq:prob} follows the rules:
\label{def:optimal_p}
\begin{enumerate}
     \item At the early stage of the epidemic, \bshee{when the overestimated infection state is smaller than the infection threshold $\bar{I}$, the testing strategy is given by $\hat u(t)=\underline{u}$, $\forall t\in [0, \hat{t}_b)$; %where $\hat{t}_b$ is the moment when $\hat I_{\max}(\hat{t}_b)=\bar{I}$ for the first time;
    \item From the time step $\hat{t}_b$ to the computed herd immunity time step $\hat{t}_h$,
    the testing strategy is given by 
    $\hat u(t)=\hat{\beta}_{\max}(t)\hat{S}_{\max}(t)-\hat{\gamma}_{\min}(t)$, $\forall t\in[\hat{t}_b, \hat{t}_h)$;}
    %  $\hat{I}_{\max}(\hat{t}_b)=\bar{I}$, $\forall t\in[\hat{t}_b, \hat{t}_h)$, 
    % $\hat u(t)=\hat{\beta}_{\max}(t)\hat{S}_{\max}(t)-\hat{\gamma}_{\min}(t)$, where $\hat{t}_h$ is the first time when $(\hat{\beta}_{\max}(\hat{t}_h) \hat{S}_{\max}(\hat{t}_h))=(\hat{\gamma}_{\min}(\hat{t}_h)+\underline{u})$;
    %$u(t)= (\hat{\beta}+\tilde \beta(t)) (\hat{S}(t)+\tilde{S}(t))-(\hat{\gamma}-\tilde{\gamma}(t))$; 
    \item Starting from the computed herd immunity time step $\hat{t}_h$, the testing strategy is given by  $\hat u(t)=\underline{u}$, $\forall t\geq \hat{t}_h$.
    % When $(\hat{\beta}_{\max}(\hat{t}_h) \hat{S}_{\max}(\hat{t}_h))=(\hat{\gamma}_{\min}(\hat{t}_h)+\underline{u})$,
    % %$(\hat{\beta}_{\max}+\tilde \beta(\hat{t}_h)) (\hat{S}(\hat{t}_h)+\tilde{S}(\hat{t}_h))=(\hat{\gamma}-\tilde{\gamma}(t_h))+\underline{u}$,
    % we have $\hat u(t)=\underline{u}$, $\forall t\geq \hat{t}_h$.
\end{enumerate}
\end{definition}
Definition \ref{def:optimal_p} modifies the optimal testing strategy in Proposition \ref{Prop:policy} by proposing a testing policy under the given ranges of estimated parameters and states. Definition \ref{def:optimal_p} implies that without accurate model parameters and states, \baike{if we know the range of the parameters and states,} the testing strategy will always assume the worst case 
scenario at any given time step to generate the testing policy, i.e., to overestimate the seriousness of the epidemic. 

% We first discuss the feasibility of the system under the proposed testing strategy in Definition \ref{def:optimal_p} and further compare the learning cost with the optimal control strategy in Proposition \ref{Prop:policy}. %Recall that Proposition \ref{Prop:policy} states the optimal control strategy for the control problem in \eqref{eq:prob} under known parameter $\beta$ and $\gamma$, and states $S(t)$ and $I(t)$ $\forall t\geq 0$. 
We discuss the feasibility of the system in \eqref{Eq: Con_Dynamics} under the proposed testing strategy in Definition~\ref{def:optimal_p} by first studying the  situation where 
$\hat{\beta}_{\max}(t)=\hat{\beta}_{\max}\geq\beta$, $\hat{\gamma}_{\min}(t)=\hat{\gamma}_{\min}\leq\gamma$, %$\hat{S}(t)\geq S(t)$, $\hat I(t)\geq I(t)$,
$\forall t\geq0$, and $\hat S(t) \in[S(t) ,\hat{S}_{\max}(t)]$, $\hat I(t) \in[\hat I(t),\hat{I}_{\max}(t)]$ $\forall t\geq 0$. %for performing the optimal testing strategy in Proposition \ref{Prop:policy}. 
This case assumes the estimated ranges of the parameters are time-invariant. % where we always leverage larger estimated transmission rate, susceptible and infected states, and lower estimated removal rate.%$\beta$, $S(t)$, and $I(t)$ $\forall t\geq0$, and smaller estimated $\gamma$.
%First we study the performance of the testing strategy in Proposition \ref{Prop:policy} by
% Let , as the estimated parameters and states, respectively. Note that these conditions overestimated the seriousness of the epidemic, since it is appropriate to overestimate the seriousness of the epidemic when allocating resources, instead of underestimating. 
Recall that $S^*(t)$, $I^*(t)$, $R^*(t)$ denote the system's trajectories under the optimal testing strategy $u^*(t)$, $\forall t\geq0$. Similar to the definitions of $t^*_b$ and $t^*_h$, we define $\hat t_b$  and $\hat t_h$ as the time steps when $ \hat {I}(\hat t_b) = \bar{I}$ for the first time and $\underline{u} = \hat{\beta}_{\max}\hat{S}(\hat t_h)-\hat{\gamma}_{\min}$ for the first time, respectively. We 
% draw
plot
both trajectories of the system under the optimal testing strategy $u^*(t)$ and the strategy $\hat u(t)$ from Definition~\ref{def:optimal_p} in Fig.~\ref{fig:example}, in order to better explain  $t^*_b$, $t^*_h$, $\hat t_b$, and $\hat t_h$. \bks{ Fig.~\ref{fig:example} compares the behavior of the epidemic under the testing strategy in Definition~\ref{def:optimal_p} when overestimating the spreading parameters, with the behavior of the epidemic under the optimal testing strategy in Proposition~\ref{Prop:policy} when the true spreading parameters are known. Consider an epidemic spreading process with $\beta=0.016$ and $\gamma=0.033$. The infection threshold is set as $\bar{I}=0.01$. The lower bound on the testing rate is $\bar{u}=0.03$. We use $S^*(t)$, $I^*(t)$, and $R^*(t)$ $t\geq0$ to represent the states generated by $u^*(t)$ following
the Optimal Testing Strategy~1 in Proposition~\ref{Prop:policy}. We use $S(t)$,
$I(t)$, and $R(t)$ $t\geq0$ to denote the true states  generated by $\hat u(t)$, when implementing 
the testing strategy given in Definition~\ref{def:optimal_p}
and leveraging the overestimated spreading parameters $\hat{\beta}(t)=1.05\beta$ and  $\hat{\gamma}(t)=0.95\gamma$ $\forall t\geq0$,
and noisy states $\hat{S}(t)$ and $\hat{I}(t)$ $\forall t \geq 0$.}
%represent  when leveraging . 
%In addition, %since
%$I(t)=\hat{I}(t)=I^*(t)$ $\forall t\in[0,t_b]$, 
%in order to distinguish  $t^*_b$ and $\hat t_b$, we set $t^*_b>\hat t_b$ in this simulation.
%Therefore, $S(t)=\hat{S}(t)$, $I(t)=\hat{I}(t)$, and %$R(t)=\hat{R}(t)$
%$t\geq0$ capture the behavior of the epidemic under the testing strategy given in Definition~\ref{def:optimal_p}.} 
From Definition~\ref{def:optimal_p}, we \bshee{will leverage Fig.~\ref{fig:example} to illustrate} the following result.
\begin{figure*}
    \centering
  \includegraphics[width=1.0\textwidth]{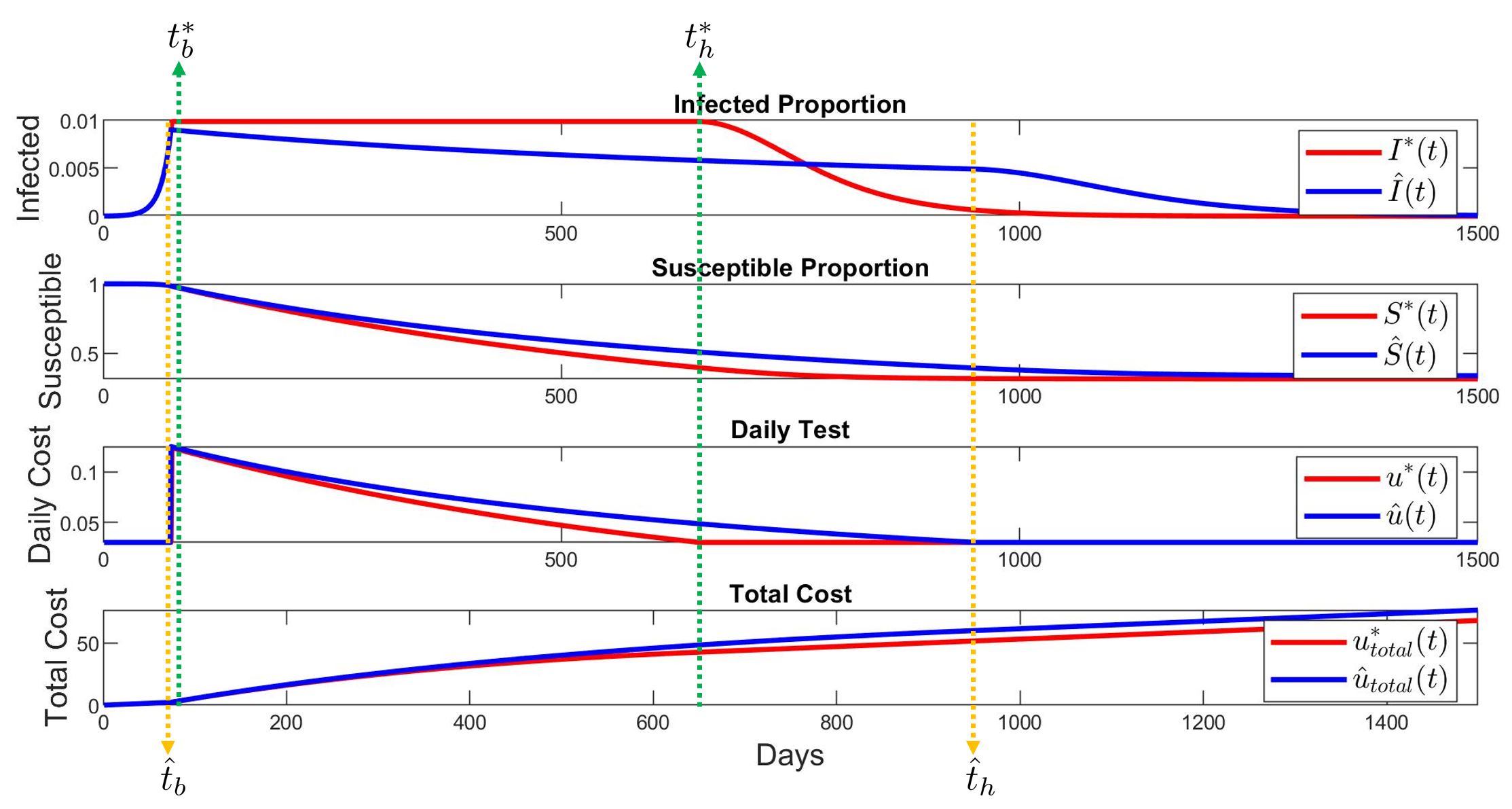}
    \caption{Comparison of Lemma \ref{lem:inaccu} with the Optimal Testing Strategy.}
    \label{fig:example}
\end{figure*}
% \begin{figure}
%     \centering
% \includegraphics[ trim = 0.2cm 0.2cm 0.1cm 0.1cm, clip, width=\columnwidth]{}
% %\par\end{centering}
% \centering{}\caption{Comparison of Lemma \ref{lem:inaccu} with the Optimal Testing Strategy}
% \label{fig:example}
% \end{figure}
\begin{lemma}
\label{lem:inaccu}
When $\hat{\beta}(t)=\hat{\beta}_{\max}\geq\beta$, $\hat{\gamma}(t)=\hat{\gamma}_{\min}\leq\gamma$, $\hat S(t) \in[S(t) ,\hat{S}_{\max}(t)]$, $\hat I(t) \in[\hat I(t),\hat{I}_{\max}(t)]$,  $\forall t\geq 0$, the system in \eqref{Eq: Con_Dynamics} under the control strategy $\hat u(t)$ generated by leveraging  $\hat{\beta}_{\max}$, $\hat{\gamma}_{\min}$, $\hat{S}(t)$, $\hat I(t)$ $\forall t\geq0$,
from Definition~\ref{def:optimal_p} is feasible. The control strategy satisfies $\hat u(t)\geq u^*(t)$, $\forall t\geq0$,
\end{lemma}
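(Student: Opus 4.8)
The plan is to establish feasibility and the pointwise comparison $\hat u(t) \geq u^*(t)$ by analyzing the three stages of Definition~\ref{def:optimal_p} in turn, and in each stage comparing the true trajectory $(S(t),I(t),R(t))$ driven by $\hat u(t)$ against the optimal trajectory $(S^*(t),I^*(t),R^*(t))$ driven by $u^*(t)$. First I would handle Stage~1. On $[0,\hat t_b)$ both strategies apply $u(t)=\underline u$, so the true trajectory and the optimal trajectory coincide while both are in Stage~1; the only subtlety is that the switching time $\hat t_b$ is defined via the overestimate $\hat I(t)\geq I(t)$, hence $\hat t_b \leq t^*_b$, i.e. the proposed strategy switches out of Stage~1 no later than the optimal one. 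During $[0,\hat t_b)$ we trivially have $\hat u(t)=\underline u = u^*(t)$, and feasibility holds because $I(t)\leq \hat I(t) = \hat I_{\max}(t) \leq \bar I$ on this interval by the definition of $\hat t_b$ together with $\hat I(t)\ge I(t)$.

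Next I would treat Stage~2, the interval $[\hat t_b,\hat t_h)$, which is the heart of the argument. Here $\hat u(t)=\hat\beta_{\max}\hat S(t)-\hat\gamma_{\min}$, and the key inequality to prove is $\hat u(t) \geq \beta S(t) - \gamma$, which by \eqref{eq:I_u} forces $\dot I(t)\leq 0$ and hence $I(t)\leq \bar I$ for all $t$ in this stage, giving feasibility. To get this inequality I would argue: since $\hat\beta_{\max}\geq\beta$, $\hat\gamma_{\min}\leq\gamma$, and $\hat S(t)\in[S(t),\hat S_{\max}(t)]$ so $\hat S(t)\geq S(t)$, we get $\hat\beta_{\max}\hat S(t) - \hat\gamma_{\min} \geq \beta S(t) - \gamma$ directly (all three perturbations push in the same direction since $S(t)\geq 0$). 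This also immediately yields $\hat u(t)\geq \beta S(t)-\gamma$. For the comparison with $u^*(t)$ on the overlap of the two strategies' Stage~2 intervals, note $u^*(t)=\beta S^*(t)-\gamma$; I would show $S(t)\geq S^*(t)$ during the relevant window because the true system under $\hat u$ has been suppressing infections at least as hard (smaller or equal $I$), so fewer susceptibles have been depleted — formally, a comparison/monotonicity argument on \eqref{eq:S_u} using $\dot S = -\beta S I$ and the fact that $I(t)$ stays at or below $\bar I = I^*(t)$ throughout the optimal system's Stage~2. Combined with $\hat S(t)\geq S(t)\geq S^*(t)$ and $\hat\beta_{\max}\geq\beta$, $\hat\gamma_{\min}\leq\gamma$, this gives $\hat u(t)=\hat\beta_{\max}\hat S(t)-\hat\gamma_{\min}\geq \beta S^*(t)-\gamma = u^*(t)$.

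For Stage~3, on $[\hat t_h,\infty)$ the proposed strategy returns to $\hat u(t)=\underline u$. I would show that once $\hat\beta_{\max}\hat S(\hat t_h) - \hat\gamma_{\min} = \underline u$, the true system has already passed its own herd-immunity point, i.e. $\beta S(t) - \gamma \leq \underline u$ for all $t\geq \hat t_h$: indeed $\beta S(t)-\gamma \leq \hat\beta_{\max}\hat S(t) - \hat\gamma_{\min}$ as in Stage~2, and $\hat S(t)$ is nonincreasing (true $S$ is strictly decreasing and the estimate tracks it within the stated band — here I would invoke monotonicity of $S$ from \eqref{eq:S_u}), so the left side stays $\leq \underline u$. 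Hence $\dot I(t)\leq(\beta S(t)-(\gamma+\underline u))I(t)\leq 0$ under $\hat u(t)=\underline u$, so $I(t)\leq\bar I$ persists and feasibility holds on $[\hat t_h,\infty)$; and $\hat u(t)=\underline u\geq u^*(t)$ is immediate since $u^*(t)\geq \underline u$ trivially, with equality or the strategies again coinciding. Piecing the three stages together gives feasibility for all $t\geq 0$ and $\hat u(t)\geq u^*(t)$ everywhere.

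The main obstacle I anticipate is the bookkeeping around the misaligned switching times $\hat t_b, t^*_b, \hat t_h, t^*_h$: because $\hat t_b\leq t^*_b$ in general, there can be an interval where the proposed strategy is already in Stage~2 (testing above $\underline u$) while the optimal strategy is still in Stage~1 (testing at $\underline u$), and one must verify $\hat u(t)\geq u^*(t)$ there too — which follows since $\hat u(t)\geq\beta S(t)-\gamma\geq 0 \geq$ ... more carefully, $\hat u(t)\geq \underline u = u^*(t)$ as long as $\hat\beta_{\max}\hat S(t)-\hat\gamma_{\min}\geq\underline u$, which holds precisely on $[\hat t_b,\hat t_h)$ by definition of $\hat t_h$. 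A secondary delicate point is justifying $S(t)\geq S^*(t)$ rigorously, which requires a careful ODE comparison argument rather than a one-line claim, since the two trajectories are driven by different inputs over different time windows; I would set this up as a Grönwall-type or direct integral-inequality comparison using that $I(t)\leq I^*(t)=\bar I$ on the optimal system's outbreak phase.
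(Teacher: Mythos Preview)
Your three-stage decomposition, the feasibility argument via $\dot I\le 0$ on $[\hat t_b,\hat t_h)$, and the comparison $S(t)\ge S^*(t)$ through the integral form of \eqref{eq:S_u} are exactly the route the paper takes. The one real gap is in your Stage~3 comparison. You write that $\hat u(t)=\underline u\geq u^*(t)$ ``is immediate since $u^*(t)\geq \underline u$ trivially''; but that inequality points the wrong way. What you actually need is $u^*(t)=\underline u$ for all $t\ge \hat t_h$, i.e.\ the ordering $t^*_h\le \hat t_h$, and nothing in your outline establishes it. The paper closes this by invoking an external result (\cite[Lemma~5]{acemoglu2021optimal}) that the optimal policy reaches herd immunity no later than any other feasible policy.

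You can in fact avoid that citation using the comparison you already set up. Once you have $S(t)\ge S^*(t)$ on $[t^*_b,t^*_h]$ (from $I(t)\le \bar I=I^*(t)$ and the log-integral of \eqref{eq:S_u}, with $S(t^*_b)\ge S^*(t^*_b)$ coming from the interval $[\hat t_b,t^*_b]$ where the two trajectories start equal and $I$ drops below $I^*$), evaluate at $t=t^*_h$: then
\[
\hat\beta_{\max}\hat S(t^*_h)-\hat\gamma_{\min}\;\ge\;\beta S(t^*_h)-\gamma\;\ge\;\beta S^*(t^*_h)-\gamma\;=\;\underline u,
\]
so the proposed strategy has not yet switched to Stage~3 at time $t^*_h$, giving $\hat t_h\ge t^*_h$. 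With that ordering in hand, $u^*(t)=\underline u$ for $t\ge t^*_h\,$, hence for $t\ge\hat t_h$, and your Stage~3 claim follows. Add this step and the proof goes through along the same lines as the paper's.
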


\begin{proof}
We compare $u^*(t)$ and $\hat{u}(t)$ by considering $t\in [0, \hat t_b)\cup [\hat t_b, t^*_b]\cup [t^*_b, t^*_h]\cup(t^*_h, \hat t_h] \cup (\hat t_h, +\infty)$, where the chronological order will be demonstrated within the context.
First, we show the system in \eqref{Eq: Con_Dynamics} under the testing policy $\hat u(t)$ $\forall t\geq 0$ is feasible. We analyze the testing strategy by considering three main testing stages. Recall that the control framework first switches its testing policy when $\hat I(\hat t_b)=\bar{I}$ ($\hat{t}_b$ is the first time when $\hat{I}(t)$ reaches $\bar{I}$, as shown in the \bshee{top plot of} Fig.~\ref{fig:example}). %in addition to $\hat{\beta}_{\max}$, $\hat{\gamma}_{\min}$, $\hat{S}(t)$. 
Since $\hat I(t)\geq I(t)$, $\forall t\geq 0$, we have 
$I^*(\hat{t}_b)=I(\hat{t}_b)\leq \hat I(\hat{t}_b)=\bar{I}$. Hence, compared to using the optimal testing policy $u^*(t)$ $\forall t\in [0, t^*_b]$, the system, by leveraging larger estimated infection states, will start to raise the testing rate away from the lower bound earlier, i.e., at $\hat{t}_b$. %The phenomena is illustrated in Fig. \ref{fig:example} by $\hat{t}_b\leq t^*_b$.
Hence, we have $\hat{t}_b\leq{t}^*_b$, as illustrated in Fig. \ref{fig:example}. In addition,
at the early stage of the epidemic, when $\hat I(t)<\bar{I}$, $\forall t\in[0, \hat t_b)$, we have $\hat u(t)=u^*(t)=\underline{u}$, $\forall t\in[0, \hat t_b)$. Then we consider the time step when $\hat I(\hat t_b)=\bar{I}$. From
Definition \ref{def:optimal_p}, we have
$(\beta S(t)-(\gamma+\hat u(t))\leq (\hat{\beta}_{\max} \hat{S}(t)-(\hat{\gamma}_{\min}+\hat u(t)) =0$. Thus $\frac{dI(t)}{dt}\leq 0$,
$\forall t\in [\hat t_b, \hat{t}_h]$, where
%starting from $t_p$ to the moment when $\bar{S}(\bar{t}_h)\bar{\beta}-\underline{\gamma}=\underline{u}$ (i.e., 
$\hat{t}_h$ is the computed herd immunity time step under the condition that $\hat{S}(\hat{t}_h)\hat{\beta}_{\max}-\hat{\gamma}_{\min}=\underline{u}$ (shown in Fig. \ref{fig:example}). Hence, the infection state $I(t)$ is non-increasing under $\hat u(t)$, and  $I(t)\leq \bar I$,
$\forall t\in [\hat {t}_b, \hat{t}_h]$. Lastly, after reaching the computed herd immunity time step $\hat t_h$, from Definition \ref{def:optimal_p}, we have $\hat u(t)=\underline{u}$ and $(\beta S(t)-(\gamma+\underline{u}))\leq (\hat{\beta}_{\max}\hat{S}(t)-(\hat{\gamma}_{\min}+\underline{u}))\leq 0$, $\forall t\geq \hat{t}_h$. Therefore,
$I(t)$ $\forall t\geq \hat{t}_h$ will monotonically decrease, and thus cannot reach $\bar{I}$ again. The trajectories of the optimal states under $u^*(t)$ and the true states under the testing strategy $\hat{u}(t)$ $\forall t\geq 0$ are shown in Fig. \ref{fig:example}. %through $I(t)$ with $\hat{u}(t)$ $\forall t\geq0$. 
In summary, starting from $t=0$, $I(t)$ cannot exceed $\bar{I}$ under the given control policy $\hat u(t)$ $\forall t\geq 0$, which completes the proof of feasibility. 

Now we compare $\hat u(t)$ and $u^*(t)$.  Recall at the early stage of the epidemic, when $\hat I(t)<\bar{I}$, $\forall t\in[0, \hat{t}_b)$, $\hat u(t)=u^*(t)=\underline{u}$.
Starting from $\hat t_b$, we have $\hat u(t)=\hat S(t)\hat\beta_{\max}-\hat\gamma_{\min}\geq\underline{u}=u^*(t)$ $\forall t\in[\hat{t}_b, t^*_b]$. Note that $\hat u(t)$ is not the optimal control strategy (but a strategy that ensures the system is feasible) for the problem defined in \eqref{eq:prob}. Moreover, \cite[Lemma~5]{acemoglu2021optimal} states that, among all the feasible frameworks, the system in \eqref{Eq: Con_Dynamics} reaches the herd immunity time step $t^*_h$ the fastest, under the optimal testing strategy $u^*(t)$. Hence, we have $\hat{t}_h\geq t_h\geq t^*_h$. Recall that $\hat S(t)$ and $S(t)$ $\forall t\geq 0$ are the estimated susceptible state and the corresponding true state under the control policy from Lemma \ref{lem:inaccu}, respectively.
In addition, $t_h$ and $\hat{t}_h$  are the time steps when $S(t_h)\beta-\gamma=\underline{u}$ and $\hat{S}(\hat{t}_h)\beta_{\max}-\gamma_{\min}=\underline{u}$ under the control policy $\hat u(t)$, respectively. The inequality $\hat{t}_h\geq t_h$ implies that when $S(t_h)\beta-\gamma=\underline{u}$, the estimated parameters and states still satisfy $\hat S(t_h)\hat\beta_{\max}-\hat\gamma_{\min}\geq\underline{u}$. Thus, %from the monotonically decreasing property of $S(t)$, 
it will take longer for the system to reach the estimated herd immunity time step $\hat t_h$. %through lower $\hat{S}(t)$ (assume $S(t_1)\geq S(t_2)$ leads to $\hat_max{S}(t)$). 
Further, the system in \eqref{Eq: Con_Dynamics} under the optimal control policy $u^*(t)$ will reach the herd immunity time step $t^*_h$ faster (or equal to) the system in \eqref{Eq: Con_Dynamics} under $\hat{u}(t)$ (i.e., the estimated herd immunity time $\hat{t}_h$).
From Proposition \ref{Prop:policy} and Definition~\ref{def:optimal_p},  $u^*(t)=\underline{u}$, $\forall t\geq t^*_h$, and $\hat u(t)=\hat{S}(t)\hat{\beta}_{\max}-\hat{\gamma}_{\min}\geq \underline{u}$, $\forall t\in [t^*_h, \hat{t}_h]$. In addition, we have $\hat u(t)=\underline{u}$, $\forall t\geq \hat{t}_h$, which  leads to $\underline{u}=u^*(t)\leq \hat u(t)$ $\forall t\geq t^*_h$, eventually.

Lastly, we analyze both testing policies when $t\in [t^*_b, t^*_h]$. Following the discussion from the feasibility and the fact that the optimal control strategy $u^*(t)$ maintains $I^*(t)=\bar{I}$ $\forall t\in [t^*_b, t^*_h]$, we have $I(t)\leq\bar{I}=I^*(t)$,  $\forall t\in [t^*_b, t^*_h]$. Hence, from the integration of \eqref{eq:S_u} (dividing $S(t)$ on both sides):
$log (S(t))=log (S(t^*_b))-\int_{t^*_b}^{t}(\beta I(\tau)) d\tau$, if $ I(t)\leq\bar{I}=I^*(t)$,  $\forall t\in [t^*_b, t^*_h]$, then $\hat S(t)\geq S(t)\geq S^*(t)$, $\forall t\in [t^*_b, t^*_h]$ (note that $S(t^*_b)\geq S^*(t^*_b)$). 
%Recall from Definition \ref{def:optimal_p} that the testing strategy follows $u(t)=\beta S(t)-\gamma$, $\forall t \in [t^*_b, t^*_h]$,
From 
the fact that $\hat S(t)\geq S^*(t)$, $\forall t\in [t^*_b, t^*_h]$, and  $\hat{\beta}_{\max}\geq\beta$, $\hat{\gamma}_{\min}\leq\gamma$, we have $\hat u(t)\geq u^*(t)$, $\forall t \in [t^*_b,t^*_h]$. 
%Since we have shown that $\hat u(t)\geq u^*(t)$, $\forall t\geq 0$, we have completed the proof.
% Last, we consider when both systems reach herd immunity. From Proposition \ref{Prop:policy}, when $t\geq t'_h$, both system will cast $\underline{u}$. Therefore, $u(t)=u_1(t)$, $\forall t\geq t_{h_1}$. 
\end{proof}
Lemma \ref{lem:inaccu} explores the case where the estimated upper and lower bounds on the parameters $\beta$ and $\gamma$ are time-invariant, and %and the $\hat \beta_{\max}$, A
the states are overestimated. %Lemma \ref{lem:inaccu} characterizes the exact bounds on the estimated parameters are known.
%and smaller $\hat \gamma_{\min}$ $\forall t\geq 0$ via the policy in Definition \ref{Prop:policy} guarantees the feasibility of the system. In addition, Lemma \ref{lem:inaccu} states that the generated testing policy $\hat u(t)$ is greater or equal to the optimal control policy $u^*(t)$ pointwise.
Lemma \ref{lem:inaccu} implies that $\hat u(t)=u^*(t)= \underline{u}$, $\forall t\in [0, \hat{t}_b)\cup [\hat{t}_h, +\infty)$. In addition, compared to $u^*(t)$, the proposed testing policy $\hat{u}(t)$ from Lemma \ref{lem:inaccu} starts to raise the testing rate from $\underline{u}$ earlier, and switches back to $\underline{u}$ later. Thus, to compare the cost between $\hat u(t)$ and the optimal control policy $u^*(t)$, we have the following lemma.
\begin{lemma}
\label{lem:cost}
The overall cost by leveraging $\hat{\beta}(t)=\hat{\beta}_{\max}\geq\beta$, $\hat{\gamma}(t)=\hat{\gamma}_{\min}\leq\gamma$, $\hat S(t) \in[S(t),\hat{S}_{\max}(t)]$, $\hat I(t) \in[\hat I(t),\hat{I}_{\max}(t)]$ $\forall t\geq 0$, is higher than the optimal cost by
$\int_{\hat{t}_b}^{\hat{t}_h}(\beta (S(t)-S^*(t))) dt %+log(I(\hat{t}_b))-log(I^*(\hat{t}_b)) 
-log(I(\hat{t}_h))+log(I^*(\hat{t}_h)).$
\end{lemma}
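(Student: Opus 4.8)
The plan is to compute the additional cost $J(\hat u)-J(u^*)=\int_0^{\infty}\!\big(\hat u(t)-u^*(t)\big)\,dt$ directly, by first showing that the integrand is supported on $[\hat t_b,\hat t_h]$ and then rewriting $\hat u$ and $u^*$ on that interval via the $I$-dynamics \eqref{eq:I_u}. First I would localize the support: by Definition~\ref{def:optimal_p}, $\hat u(t)=\underline u$ for $t\in[0,\hat t_b)\cup[\hat t_h,+\infty)$; by Proposition~\ref{Prop:policy}, $u^*(t)=\underline u$ for $t\in[0,t^*_b)\cup[t^*_h,+\infty)$. Lemma~\ref{lem:inaccu} (together with Proposition~\ref{Prop:policy}) gives the chronological ordering $\hat t_b\le t^*_b\le t^*_h\le t_h\le\hat t_h$, so on $[0,\hat t_b)$ both inputs equal $\underline u$ (since $[0,\hat t_b)\subseteq[0,t^*_b)$), and on $[\hat t_h,+\infty)$ both equal $\underline u$ (since $[\hat t_h,+\infty)\subseteq[t^*_h,+\infty)$). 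Hence $\hat u(t)-u^*(t)=0$ off $[\hat t_b,\hat t_h]$, and (even when $\underline u>0$, so that $J(u^*)$ and $J(\hat u)$ are each infinite, the difference is the finite integral) $J(\hat u)-J(u^*)=\int_{\hat t_b}^{\hat t_h}\!\big(\hat u(t)-u^*(t)\big)\,dt$.

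Next I would solve \eqref{eq:I_u} for the control along each trajectory. Along the true trajectory driven by $\hat u$, dividing $\dot I(t)=(\beta S(t)-\gamma-\hat u(t))I(t)$ by $I(t)>0$ gives $\hat u(t)=\beta S(t)-\gamma-\frac{d}{dt}\log I(t)$; along the optimal trajectory, $u^*(t)=\beta S^*(t)-\gamma-\frac{d}{dt}\log I^*(t)$. These are identities that follow from the dynamics alone (not from the three–stage structure), valid on the whole horizon since $I(t),I^*(t)>0$ for all finite $t$. Integrating the difference over $[\hat t_b,\hat t_h]$, the constants $\gamma$ cancel and the logarithmic terms telescope:
\[
J(\hat u)-J(u^*)=\int_{\hat t_b}^{\hat t_h}\!\beta\big(S(t)-S^*(t)\big)\,dt-\log I(\hat t_h)+\log I^*(\hat t_h)+\log I(\hat t_b)-\log I^*(\hat t_b).
\]
Finally, on $[0,\hat t_b)$ both systems evolve from the common initial state $\boldsymbol{x}(0)$ under the common input $u(t)=\underline u$, so their trajectories coincide on $[0,\hat t_b]$; in particular $I(\hat t_b)=I^*(\hat t_b)$ (already noted in the proof of Lemma~\ref{lem:inaccu}), so the last two logarithmic terms cancel and the claimed expression follows.

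The argument is essentially bookkeeping; the two points that need care are (i) justifying that $\hat u-u^*$ vanishes outside $[\hat t_b,\hat t_h]$, which rests precisely on the ordering $\hat t_b\le t^*_b\le t^*_h\le t_h\le\hat t_h$ from Lemma~\ref{lem:inaccu}, and (ii) the boundary identity $I(\hat t_b)=I^*(\hat t_b)$ that makes the terms at $\hat t_b$ drop out. A secondary subtlety worth stating explicitly is that when $\underline u>0$ the individual costs diverge, so ``additional cost'' must be interpreted as the (finite) integral of the pointwise difference $\hat u(t)-u^*(t)$.
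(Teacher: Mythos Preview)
Your proposal is correct and follows essentially the same approach as the paper: solve \eqref{eq:I_u} for the control as $u(t)=\beta S(t)-\gamma-\frac{d}{dt}\log I(t)$, integrate the difference $\hat u-u^*$ over $[\hat t_b,\hat t_h]$, and cancel the boundary terms at $\hat t_b$ using $I(\hat t_b)=I^*(\hat t_b)$. Your write-up is in fact more careful than the paper's (terse) proof, since you explicitly justify why the integrand is supported on $[\hat t_b,\hat t_h]$ via the ordering from Lemma~\ref{lem:inaccu} and flag the interpretation issue when $\underline u>0$.
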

\begin{proof}
From \eqref{eq:I_u}, we have 
 $u(t)=-\frac{1}{I(t)}\frac{dI(t)}{dt}+\beta S(t) -\gamma $. By integrating the equation, as was done in \cite[Lemma 5]{acemoglu2021optimal}, and comparing $\hat{u}(t)$ and $u^*(t)$,
 we have 
% \begin{equation*}
% \int_{\hat{t}_b}^{\hat{t}_h}u(t)dt=%\int_{0}^{\hat{t}_b}u(t)dt+
% \int_{\hat{t}_b}^{\hat{t}_h}(\beta S(t)-\gamma) dt +log(I(\hat{t}_b))-log(I(\hat{t}_h)).
% \end{equation*}
% When under the optimal testing strategy $u^*(t)$, we have 
% \begin{equation*}
% \int_{0}^{t'_h}u^*(t)dt=\int_{0}^{t_p}u(t)dt+ \int_{t_p}^{t^*_h}(\beta S^*(t)-\gamma) dt +\int_{t^*_h}^{t'_h}(\beta S^*(t)-\gamma) dt +log(I^*(t^*_h))-log(I^*(t'_h)).
% \end{equation*}
% From \cite[Lemma 4]{acemoglu2021optimal},  $log(I^*(t^*_h))=\bar{I}$ under the optimal control strategy, 
%Note that $I(t_b)=\hat I (t_b)=\bar{I}$ is used.
%\begin{equation*}
%\int_{0}^{\hat{t}_h}(\hat u(t)-u^*(t))dt =
$\int_{\hat{t}_b}^{\hat{t}_h} (\hat u(t)-u^*(t))dt =
\int_{\hat{t}_b}^{\hat{t}_h}(\beta (S(t)-S^*(t))) dt %+log(I(\hat{t}_b))-log(I^*(\hat{t}_b)) 
-log(I(\hat{t}_h))+log(I^*(\hat{t}_h))$, 
%\end{equation*}
%  From \eqref{eq:I_u}, we have 
%  $u(t)=-\frac{1}{I}\frac{dI}{dt}+\beta IS -\gamma I$. By integrating the equation from $0$ to $t'_h$, we have 
% \begin{equation*}
% \int_{0}^{t'_h}u(t)dt=\int_{0}^{t'_h}(\beta S(t)-\gamma) dt +log(\bar{I})-log(I(t'_h)).
% \end{equation*}
where $log(I(\hat{t}_b))=log(I^*(\hat{t}_b))$ is used.
\end{proof}
The difference between $\hat u(t)$ and $u^*(t)$ is captured by the difference between the susceptible states $S(t)$ and $S^*(t)$, and the infection states when the systems reach the computed herd immunity time step $\hat{t}_h$. 
Lemma \ref{lem:inaccu} and \ref{lem:cost} 
study one approach to guarantee the system's feasibility when knowing the ranges of the parameters and states. 
%Further, Lemma \ref{lem:inaccu} and \ref{lem:cost} show that the susceptible state from the case in Lemma \ref{lem:inaccu} dominates the trajectory of the optimal susceptible state \baike{$\forall t\in[0, t^*_h]$}. 
%==============add the discussion later===
% From the fact that the cumulative testing cost is related to the susceptible state and the modified case from Lemma \ref{lem:inaccu} will consume \baike{more tests than (or equal to) the optimal strategy at any given time $t$ $\forall t\geq 0$},
% \baike{we have the following conclusion.} 
% \begin{corollary}
% At any given moment, compared to the system under the optimal testing strategy, there will be fewer people having not been infected under the proposed testing strategy in Lemma \ref{lem:inaccu}.
% \end{corollary}
% \begin{proof}
% From Lemma \ref{lem:cost}, we have $\int_{0}^{t} (\hat u(t)-u^*(t))dt =
% \int_{\hat{t}_b}^{t}(\beta (S(t)-S^*(t))) dt %+log(I(\hat{t}_b))-log(I^*(\hat{t}_b))
% -log(I(t))+log(I^*(t))$
% \end{proof}
%==============add the discussion later===

%Therefore, compared to the optimal control strategy, the modified testing strategy from Lemma \ref{lem:inaccu} will further lower the rate of the infection; since the cumulative testing cost is related to the susceptible state, the modified testing strategy Lemma \ref{lem:inaccu} will consume more daily testing.
Now, the next result analyzes the testing strategy given in Definition \ref{def:optimal_p} with possibly time-varying estimates by leveraging the analysis from Lemma \ref{lem:inaccu} and \ref{lem:cost}.

%from 
%the Problem Formulation, %, there will be confidence intervals for the true parameters and states. We then analyze the cases when the estimated parameters and states satisfy the following conditions:
%The following theorem provides the feasibility and an upper bound on the learning cost, under the condition that the parameters for computing the optimal policies satisfy 
%if the parameters and states we leverage will satisfy
%$\hat{\beta}(t)\in [\beta, \hat{\beta}_{\max}(t)]$,  $\hat{\gamma}_{\min}(t)\in [\hat{\gamma}_{\min}(t), \gamma]$, $\hat S(t)\in[ S(t), \hat{S}_{\max}(t)]$, $\hat I(t)\in[I(t), \hat{I}_{\max}(t)]$, $\forall t\geq 0$, 
%The formulation guarantees we always leverage larger estimated transmission rate and states, but lower removal rate.
\begin{theorem}
\label{Thm:Bound}
The testing strategy $\hat u(t)$ from Definition~\ref{def:optimal_p} by leveraging $\hat{\beta}_{\max}(t)$, $\hat{\gamma}_{\min}(t)$, $\hat{S}_{\max}(t)$ and $\hat{I}_{\max}(t)$, $\forall t\geq 0$ 
satisfies $\hat u(t)\geq u^*(t)$ $\forall t\geq 0$. Further,
%compared to the optimal control strategy $u^*(t)$,
the optimality gap is bounded by $\int_{\hat {t}_b}^{\hat{t}_h}(\beta (S(t)-S^*(t))) dt %+log(I(\hat{t}_b))-log(I^*(\hat{t}_b))
-log(I(\hat{t}_h))+log(I^*(\hat{t}_h))$, where $S(t)$ and $I(t)$ $\forall t\geq 0$ are the \bshee{true} states generated by using $\hat u(t)$.
\end{theorem}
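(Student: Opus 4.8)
The plan is to generalize Lemma~\ref{lem:inaccu} and Lemma~\ref{lem:cost} from the time-invariant case to the time-varying case, essentially by re-examining each stage of the argument and checking that only the pointwise inequalities $\hat{\beta}_{\max}(t)\geq\beta$, $\hat{\gamma}_{\min}(t)\leq\gamma$, $\hat S(t)\leq\hat S_{\max}(t)$, $\hat I(t)\leq \hat I_{\max}(t)$, and $\hat I(t)\geq I(t)$ are used, never the constancy of the bounds. First I would establish feasibility: on $[0,\hat t_b)$ we have $\hat u(t)=u^*(t)=\underline u$ and $\hat I_{\max}(t)<\bar I$, hence $I(t)\le \hat I(t)\le \hat I_{\max}(t)<\bar I$; on $[\hat t_b,\hat t_h)$, Definition~\ref{def:optimal_p} gives $\hat u(t)=\hat\beta_{\max}(t)\hat S_{\max}(t)-\hat\gamma_{\min}(t)$, so
\begin{equation*}
\beta S(t)-(\gamma+\hat u(t))\;\le\;\hat\beta_{\max}(t)\hat S_{\max}(t)-(\hat\gamma_{\min}(t)+\hat u(t))\;=\;0,
\end{equation*}
whence $\dot I(t)\le 0$ and $I(t)\le\bar I$ on that interval; and on $[\hat t_h,+\infty)$, $\hat u(t)=\underline u$ with $\hat\beta_{\max}(\hat t_h)\hat S_{\max}(\hat t_h)=\hat\gamma_{\min}(\hat t_h)+\underline u$, and since the true $S(t)$ is nonincreasing while $\hat S_{\max}(t)$, $\hat\beta_{\max}(t)$, $\hat\gamma_{\min}(t)$ stay consistent with the estimated-herd-immunity condition, $\beta S(t)-(\gamma+\underline u)\le 0$, so $I(t)$ strictly decreases and never returns to $\bar I$. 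This gives feasibility on all of $[0,+\infty)$.

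Next I would prove $\hat u(t)\geq u^*(t)$ stagewise, reusing the ordering $\hat t_b\le t^*_b$ and $\hat t_h\ge t^*_h$. On $[0,\hat t_b)$ both equal $\underline u$. On $[\hat t_b,t^*_b]$, $\hat u(t)=\hat\beta_{\max}(t)\hat S_{\max}(t)-\hat\gamma_{\min}(t)\ge\underline u=u^*(t)$ by definition of $\hat t_b$ and monotonicity of the relevant quantities. For $[t^*_b,t^*_h]$, the optimal strategy holds $I^*(t)=\bar I$; feasibility already gives $I(t)\le\bar I=I^*(t)$ there, so integrating $\frac{d}{dt}\log S(t)=-\beta I(t)$ and comparing with the same identity for $S^*(t)$ — using $S(t^*_b)\ge S^*(t^*_b)$ (which itself follows from $I(\tau)\le\bar I=I^*(\tau)$ on $[\hat t_b,t^*_b]$) — yields $\hat S_{\max}(t)\ge S(t)\ge S^*(t)$; combined with $\hat\beta_{\max}(t)\ge\beta$, $\hat\gamma_{\min}(t)\le\gamma$ this gives $\hat u(t)=\hat\beta_{\max}(t)\hat S_{\max}(t)-\hat\gamma_{\min}(t)\ge\beta S^*(t)-\gamma=u^*(t)$. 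On $(t^*_h,\hat t_h]$, $u^*(t)=\underline u$ while $\hat u(t)=\hat\beta_{\max}(t)\hat S_{\max}(t)-\hat\gamma_{\min}(t)\ge\underline u$ by definition of $\hat t_h$, and on $(\hat t_h,+\infty)$ both equal $\underline u$.

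Finally, for the optimality gap, I would integrate the identity $u(t)=-\frac{1}{I(t)}\dot I(t)+\beta S(t)-\gamma$ for the true closed-loop trajectory, exactly as in Lemma~\ref{lem:cost} and \cite[Lemma~5]{acemoglu2021optimal}: since $\hat u(t)-u^*(t)=0$ outside $[\hat t_b,\hat t_h]$, the total excess cost equals
\begin{equation*}
\int_{\hat t_b}^{\hat t_h}\bigl(\hat u(t)-u^*(t)\bigr)\,dt=\int_{\hat t_b}^{\hat t_h}\beta\bigl(S(t)-S^*(t)\bigr)\,dt-\log I(\hat t_h)+\log I^*(\hat t_h),
\end{equation*}
using $\log I(\hat t_b)=\log I^*(\hat t_b)$ (both trajectories sit at $\bar I$ at their respective switch, and the boundary terms telescope as in the cited lemma). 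I expect the main obstacle to be bookkeeping the chronological order of the switching times $\hat t_b\le t^*_b\le t^*_h\le t_h\le\hat t_h$ and justifying $S(t^*_b)\ge S^*(t^*_b)$ and the extension of \cite[Lemma~5]{acemoglu2021optimal} (fastest herd immunity under the optimal policy) to the time-varying estimated bounds; once those orderings are pinned down, the three claims follow by the same interval-by-interval comparison as in Lemma~\ref{lem:inaccu}. Care is also needed that the identity $\log I(\hat t_b)=\log I^*(\hat t_b)$ still holds when $\hat t_b<t^*_b$, which requires noting that on $[\hat t_b,t^*_b]$ the difference $\hat u-u^*$ integrates against the same boundary structure — this is the one place where the time-varying generality could in principle bite, so I would verify it explicitly.
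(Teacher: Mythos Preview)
Your approach is essentially the paper's: feasibility stage by stage, then the interval decomposition $[0,\hat t_b)\cup[\hat t_b,t^*_b]\cup[t^*_b,t^*_h]\cup(t^*_h,\hat t_h]\cup(\hat t_h,\infty)$ for the comparison $\hat u\ge u^*$, and finally the integrated identity from Lemma~\ref{lem:cost} for the optimality gap. Two of your parenthetical justifications are wrong, however, and would not survive as written.

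First, you justify $S(t^*_b)\ge S^*(t^*_b)$ via ``$I(\tau)\le\bar I=I^*(\tau)$ on $[\hat t_b,t^*_b]$''. But $I^*(\tau)<\bar I$ on $[\hat t_b,t^*_b)$ by the very definition of $t^*_b$ as the \emph{first} time $I^*$ reaches $\bar I$. The correct argument (the paper's) is that the two closed-loop trajectories coincide on $[0,\hat t_b]$ because both controls equal $\underline u$ there; after $\hat t_b$, $I$ is nonincreasing under $\hat u$ while $I^*$ is still increasing under $u^*=\underline u$, so $I(\tau)\le I^*(\tau)$ on $[\hat t_b,t^*_b]$, and integrating $\frac{d}{dt}\log S=-\beta I$ gives $S(t^*_b)\ge S^*(t^*_b)$. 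Second, $\log I(\hat t_b)=\log I^*(\hat t_b)$ is not because ``both trajectories sit at $\bar I$ at their respective switch'' --- neither one is at $\bar I$ at time $\hat t_b$ (only $\hat I_{\max}(\hat t_b)=\bar I$) --- but again simply because the two trajectories are identical on $[0,\hat t_b]$. With these two fixes, your proof is the paper's.
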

\begin{proof}
Let $\hat S(t)$, $\hat I(t)$, $\hat u(t)$, $\forall t\geq 0$ denote the estimated states and the testing policy obtained via leveraging $\hat{\beta}_{\max}(t)$, $\hat{\gamma}_{\min}(t)$, $\hat{S}_{\max}(t)$, and $\hat{I}_{\max}(t)$, $\forall t\geq 0$ (the extremes \bshee{of} the estimated states), \bshee{while $S(t)$ and $I(t)$ $\forall t\geq 0$ denote the corresponding true states}. Let $\hat t_b$ and $\hat t_h$ denote the 
time steps when $\hat I_{\max}(t)$ reaches $\bar{I}$ for the first time and when the system reaches computed herd immunity, respectively.
From Lemma \ref{lem:inaccu}, we have that the system under the testing policy $\hat u(t)$ is feasible. Hence, we study the upper bound on the learning cost by first showing that $\hat u(t)\geq \hat u^*(t)$ $\forall t\geq 0$. Following Proposition \ref{Prop:policy} and the proof of Lemma \ref{lem:inaccu}, any feasible non-optimal control policies will consume extra tests only during the time interval $t\in[\hat {t}_{b}, \hat t_{h}]$. In addition, the overestimation of the seriousness of the epidemic by leveraging $\hat{\beta}_{\max}(t)$, $\hat{\gamma}_{\min}(t)$, $\hat{S}_{\max}(t)$, and $\hat{I}_{\max}(t)$, $\forall t\geq 0$, will cause the testing policy $\hat u(t)$ to start to switch away from $\underline{u}$ earlier but switch back to $\underline{u}$ later. Hence, we have
$\hat{u}(t)= u^*(t)=\underline{u}$ $\forall t\in [0, \hat t_b)\cup [\hat t_h,+\infty)$, and
$\hat{u}(t)\geq u^*(t)=\underline{u}$ $\forall t\in [\hat t_b, t^*_b)\cup (t^*_h, \hat t_h)$. Note that we have $S(t^*_b)\geq S^*(t^*_b)$, since $I^*(t)\geq I(t)$ $\forall t\in [0, t^*_b)$ will lead to $log (S(t))=log (S(0))-\int_{0}^{t^*_b}(\beta I(\tau)) d\tau\geq log (S(0))-\int_{0}^{t^*_b}(\beta I^*(\tau)) d\tau= log (S^*(t))$. We further compare $\hat u(t)$ and $u^*(t)$ $\forall t\in [t^*_b,t^*_h]$. Following the proof of Lemma \ref{lem:inaccu}, the optimal control strategy $u^*(t)$ maintains $I^*(t)=\bar{I}$ $\forall t\in [t^*_b, t^*_h]$, and $I(t)\leq\bar{I}=I^*(t)$,  $\forall t\in [t^*_b, t^*_h]$. Then, from
$log (S(t))=log (S(t^*_b))-\int_{t^*_b}^{t}(\beta I(\tau)) d\tau$, if $ I(t)\leq\bar{I}=I^*(t)$ $\forall t\in [t^*_b, t^*_h]$, then $\hat S_{\max}(t)\geq S(t)\geq S^*(t)$, $\forall t\in [t^*_b, t^*_h]$ (note that $S(t^*_b)\geq S^*(t^*_b)$). 
%Recall from Definition \ref{def:optimal_p} that the testing strategy follows $u(t)=\beta S(t)-\gamma$, $\forall t \in [t^*_b, t^*_h]$,
From 
the fact that $\hat S_{\max}(t)\geq S^*(t)$,  $\hat{\beta}_{\max}(t)\geq\beta$, $\hat{\gamma}_{\min}(t)\leq\gamma$, $\forall t\in [t^*_b, t^*_h]$,
and Definition \ref{def:optimal_p}, we have $\hat u(t)=\hat{\beta}_{\max}(t)\hat{S}_{\max}(t)-\hat{\gamma}_{\min}(t)\geq \beta S(t)-\gamma=u^*(t)$, $\forall t \in [t^*_b,t^*_h]$. Thus,
we have shown that $\hat u(t)\geq u^*(t)$, $\forall t\geq 0$.  From the proof of Lemma \ref{lem:cost}, by replacing $\hat I(t)$ as $\hat I_{\max}(t)$ and $\hat S(t)$ as $\hat S_{\max}(t)$, we obtain the optimality gap, which completes the proof.
\end{proof}
Theorem \ref{Thm:Bound} studies the testing strategy proposed in Definition \ref{def:optimal_p}.
Under the condition that the ranges of the learned parameters and estimated states are known, i.e., $\hat{\beta}(t)
,
% \in$ and $
\beta\in [\hat{\beta}_{\min}(t),\hat{\beta}_{\max}(t)]$; $\hat{\gamma}(t),
% \in$ and $
\gamma \in [\hat{\gamma}_{\min}(t),\hat{\gamma}_{\max}(t)]$; 
$\hat S(t),
% \in$ and $
S(t) \in[\hat{S}_{\min}(t),\hat{S}_{\max}(t)]$; $\hat I(t),
% \in$ and $
I(t) \in[\hat{I}_{\min}(t),\hat{I}_{\max}(t)]$ $\forall t\geq 0$, Definition~\ref{def:optimal_p} casts the testing by overestimating the seriousness of the epidemic at any given time step. Theorem \ref{Thm:Bound} ensures the system in \eqref{Eq: Con_Dynamics} is feasible via leveraging Definition~\ref{def:optimal_p}.
% Hence, recall from the Problem formulation, if $(\hat{\beta}(t),\beta)\in [\hat{\beta}_{\min}(t),\hat{\beta}_{\max}(t)]$, $(\hat{\gamma}(t), \gamma) \in [\hat{\gamma}_{\min}(t),\hat{\gamma}_{\max}(t)]$, $(\hat S(t), S(t)) \in[\hat{S}_{\min}(t),\hat{S}_{\max}(t)]$, $(\hat I(t), I(t)) \in[\hat{I}_{\min}(t),\hat{I}_{\max}(t)]$ $\forall t\geq 0$, the feasible testing strategy proposed in Theorem \ref{Thm:Bound} becomes Definition \ref{def:optimal_p}.
Further, Theorem \ref{Thm:Bound} provides a bound on the testing cost \baike{under uncertainties captured by} the ranges of the learned parameters and estimated states. In addition, Theorem \ref{Thm:Bound} shows that, by leveraging Definition \ref{def:optimal_p}, the susceptible state dominates the trajectory of the optimal susceptible state $\forall t\in[0, t^*_h]$,
% \baike{Further, from $\underline{u}=\beta S^*(t^*_h)-\gamma=\hat\beta_{\max}(\hat t_h) \hat S_{\max}(\hat t_h)-\hat\gamma_{\min}(\hat t_h)$, we have $S(\hat t_h)\leq \hat S_{\max}(\hat t_h)\leq S^*(t^*_h)$. Recall from Theorem \ref{Thm:Bound}, $t^*_h \leq \hat t_h$. Therefore, we must have $S(t)$ }
which gives the following result.
\begin{corollary}
\label{cor:sus}
% Compared to the system under the optimal testing strategy, there will be fewer people having not been infected under the proposed testing strategy in Definition \ref{def:optimal_p} at any given moment before $t^*_h$.
For any time $t$ up to the herd immunity time step $t^*_h$, $t\in [0,t_h^*]$, the cumulative number of people infected for the optimal testing strategy, $I^*(t)+R^*(t)$, 
% $t\in [0,t_h^*]$, 
will be greater than or equal to the cumulative number of people infected from the proposed testing strategy in Definition~\ref{def:optimal_p}, $I(t)+R(t)$.
% , $t\in [0,t_h^*]$
\end{corollary}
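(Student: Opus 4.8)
The plan is to derive the claim directly from the conservation structure of the SIR model together with the domination of susceptible trajectories established in Theorem \ref{Thm:Bound}. First I would observe that since the total population is conserved in \eqref{Eq: Con_Dynamics}, i.e. $S(t)+I(t)+R(t)$ is constant in $t$ and the same constant for both closed-loop trajectories (they share the same initial condition), we have $I(t)+R(t) = S(0)+I(0)+R(0) - S(t)$ and likewise $I^*(t)+R^*(t) = S^*(0)+I^*(0)+R^*(0) - S^*(t)$. Hence showing $I^*(t)+R^*(t)\geq I(t)+R(t)$ for all $t\in[0,t_h^*]$ is equivalent to showing $S(t)\geq S^*(t)$ for all $t\in[0,t_h^*]$.

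Next I would assemble the inequality $S(t)\geq S^*(t)$ on $[0,t_h^*]$ by reusing the three-interval bookkeeping from the proof of Theorem \ref{Thm:Bound}. On $[0,\hat t_b)$ both systems run the same control $u=\underline{u}$ from the same state, so $S(t)=S^*(t)$. On $[\hat t_b, t_b^*)$ we have $\hat u(t)\geq u^*(t)=\underline u$, so the true infected trajectory $I(t)$ under $\hat u$ satisfies $I(t)\leq I^*(t)$ (the extra testing suppresses infections relative to the optimal policy, which is still at $\underline u$ there); integrating $d\log S/dt = -\beta I(t)$ then gives $\log S(t) = \log S(0) - \int_0^t \beta I(\tau)\,d\tau \geq \log S^*(0) - \int_0^t \beta I^*(\tau)\,d\tau = \log S^*(t)$, i.e. $S(t)\geq S^*(t)$, and in particular $S(t_b^*)\geq S^*(t_b^*)$. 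On $[t_b^*, t_h^*]$ the optimal policy holds $I^*(t)=\bar I$ while feasibility of $\hat u$ (Theorem \ref{Thm:Bound}) gives $I(t)\leq\bar I = I^*(t)$; integrating $d\log S/dt=-\beta I(t)$ from $t_b^*$ and using $S(t_b^*)\geq S^*(t_b^*)$ again yields $S(t)\geq S^*(t)$ on this interval. Chaining the three intervals covers all of $[0,t_h^*]$.

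Finally I would combine the two steps: $S(t)\geq S^*(t)$ on $[0,t_h^*]$ together with the conservation identities gives $I(t)+R(t)\leq I^*(t)+R^*(t)$ on $[0,t_h^*]$, which is the assertion. I do not expect a serious obstacle here, since every piece is already present in the excerpt; the only point requiring mild care is the claim $I(t)\leq I^*(t)$ on $[\hat t_b, t_b^*)$ — one must note that on this interval $u^*(t)=\underline u$ while $\hat u(t)\geq\underline u$, both trajectories start from the common state at $\hat t_b$ with $I(\hat t_b)=I^*(\hat t_b)\leq\bar I$, and a comparison argument for the scalar ODE \eqref{eq:I_u} (larger removal rate $\Rightarrow$ smaller $I$, using that $S(t)\geq S^*(t)$ propagates forward consistently) closes the gap; alternatively one can simply invoke that $I(t)\leq\bar I = $ the optimal level once $t\geq t_b^*$ and handle $[\hat t_b,t_b^*)$ by the same $\log S$ monotonicity bootstrapped from $S(\hat t_b)=S^*(\hat t_b)$ and $\hat u\geq u^*$. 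I would present the $\log S$-integration version to keep the argument uniform across all three intervals.
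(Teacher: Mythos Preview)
Your proposal is correct and matches the paper's own (implicit) argument: the paper presents Corollary~\ref{cor:sus} as an immediate consequence of the susceptible-state domination $S(t)\geq S^*(t)$ on $[0,t_h^*]$ established inside the proof of Theorem~\ref{Thm:Bound}, and conservation of $S+I+R$ then gives $I(t)+R(t)\leq I^*(t)+R^*(t)$ exactly as you write. Your three-interval bookkeeping and the $\log S$-integration are precisely the ingredients the paper uses in the proofs of Lemma~\ref{lem:inaccu} and Theorem~\ref{Thm:Bound}, so there is no substantive difference in approach.
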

From Lemma \ref{lem:inaccu}, \ref{lem:cost}, Theorem \ref{Thm:Bound}, and Corollary \ref{cor:sus}, we reach the following conclusions on the testing strategy given in Definition~\ref{def:optimal_p}.
\begin{remark}
\label{remark}
When learning and estimation strategies offer $\hat{\beta}(t)\in [\hat{\beta}_{\min}(t),\hat{\beta}_{\max}(t)]$, $\hat{\gamma}(t)\in [\hat{\gamma}_{\min}(t),\hat{\gamma}_{\max}(t)]$, $\hat S(t) \in[\hat{S}_{\min}(t),\hat{S}_{\max}(t)]$, $\hat I(t) \in[\hat{I}_{\min}(t),\hat{I}_{\max}(t)]$ $\forall t\geq 0$, compared to the optimal control strategy given in Proposition~\ref{Prop:policy},
the testing strategy from Definition \ref{def:optimal_p} under \baike{uncertainties captured by} the parameter learning and state estimation process will:

\begin{enumerate}
    \item Overestimate the seriousness of the epidemic at any given time step;

    \item React earlier to the outbreak and switch back to the lower bound on the testing rate later;

    \item Cost more or the same in terms of testing at each time $t$ $\forall t\geq0$;
    \item 
    % \baike{
    Generate \baike{fewer or equal total uninfected individuals in the population} at any given time step up to $t^*_h$.
    % , i.e.,  over the course of the outbreak.}
\end{enumerate}
%In addition, if the learning and state estimation methods generate tighter ranges of the parameters and states at each given time, the testing strategy from Definition \ref{def:optimal_p} will cost less pointwise.
\end{remark}

\section{Simulation}
\label{section4}
We now illustrate the proposed testing  strategy from Definition \ref{def:optimal_p} via simulations. Consider an epidemic spreading process in \eqref{Eq: Con_Dynamics} with $\beta=0.16$ and $\gamma =0.033$. The goal is to minimize the total number of tests 
during the epidemic given by \eqref{eq:prob} while maintaining the infection level under (or equal to) $1\%$ of the population, i.e. $\bar{I}=0.01$. We update the parameters, states, and testing policies daily, under the condition that the daily upper and lower bounds on the testing rates are $\bar{u}=15\%$ and  $\underline{u}=3\%$, respectively. 
% We assume the infection threshold $\bar{I}=0.01$, i.e., we allow the daily infected population not to exceed $1\%$. 
The initial conditions are $I(0)=0.00001$, $R(0)=0$, $S(0)=1-I(0)$. The observed data sets are corrupted with noise, and the signal-to-noise ratio is $55dB$. %\bshee{Note that we further process the noisy infection states that are below zero by taking absolute values.}
From Fig.~\ref{fig_framework}, the observed data will impact both model parameter \baike{estimation} and the computation of the control input. % We leverage testing policies from Proposition \ref{Prop:policy} and Definition \ref{def:optimal_p} for the system, and compare the performance of both policies under the impact of the parameter learning process. 
%First through fixing $u(t)=\underline{u}$, as claimed in Corollary~\ref{lem:opt1}, i.e., $u(t)=\underline{u}$, $\forall t\geq0$, the epidemic dynamics are captured by Fig. \ref{fig:SIR}  a). From the simulation, the peak infection value infers that there will be more than $20\%$ population being infected daily for a certain amount of days. Hence, 
We leverage the testing policies given by Proposition \ref{Prop:policy} and Definition \ref{def:optimal_p}, then compare the results.
% \begin{figure}
%     \centering
% \includegraphics[ trim = 0cm 0cm 0cm 0cm, clip, width=\columnwidth]{}
% %\par\end{centering}
% \vspace{-6ex}
% \centering{}\caption{SIR Dynamics under $u(t)=\underline{u}$, $\forall t\geq0$}
% \label{fig:SIR}
% \vspace{-2ex}
% \end{figure}

Besides the optimal testing strategy that leverages the true parameters and states, we consider two types of testing strategies. The first testing strategy (Strategy~\bshee{1}) is to leverage Proposition~\ref{Prop:policy} by \baike{considering the noisy data and estimated parameters as the  states and model parameters for policy-making, respectively}. The second testing strategy (Strategy \bshee{2}) is to leverage Definition \ref{def:optimal_p}, where the ranges of the parameters and states are given daily. Fig.~\ref{fig:SIR} shows the comparison between the epidemic dynamics under three testing strategies, \bshee{while} the parameter estimation process via generalized linear regression \cite{draper1998applied} is shown in Fig.~\ref{fig:SIR_P}.
\bshee{From Fig.~\ref{fig:SIR_P}, we find that the transmission rate $\hat\beta(t)$ is highly underestimated during the spreading process, which may lead to the underestimation of the seriousness of the epidemic. Hence, we can compare the robustness of Strategy~1 and Strategy~2 against model uncertainties.}
%and the true and noisy infection states under the testing policy proposed by Definition~\ref{def:optimal_p}, respectively. 
Note that we use $I^*(t)$, $S^*(t)$ to represent the system trajectories under the optimal daily testing rate $u^*(t)$ and the cumulative cost $u_{total}^*(t)$. Similarly, we use $I_1(t)$ and $S_1(t)$ and $I_2(t)$, $S_2(t)$ to denote the \bshee{true} system trajectories under Strategy~\bshee{1}\baike{:} $\hat u_1(t)$ and Strategy \bshee{2}\baike{:} $\hat u_2(t)$, respectively. \bshee{Note that the corresponding noisy states ($\hat I_1(t)$, $\hat S_1(t)$, $\hat I_2(t)$, and $\hat S_2(t)$) which we leverage for parameter estimation and control policy generation are not shown in these plots.}
\begin{figure}[h]
    \centering
\includegraphics[ trim = 0cm 0cm 0cm 0cm, clip, width=\columnwidth]{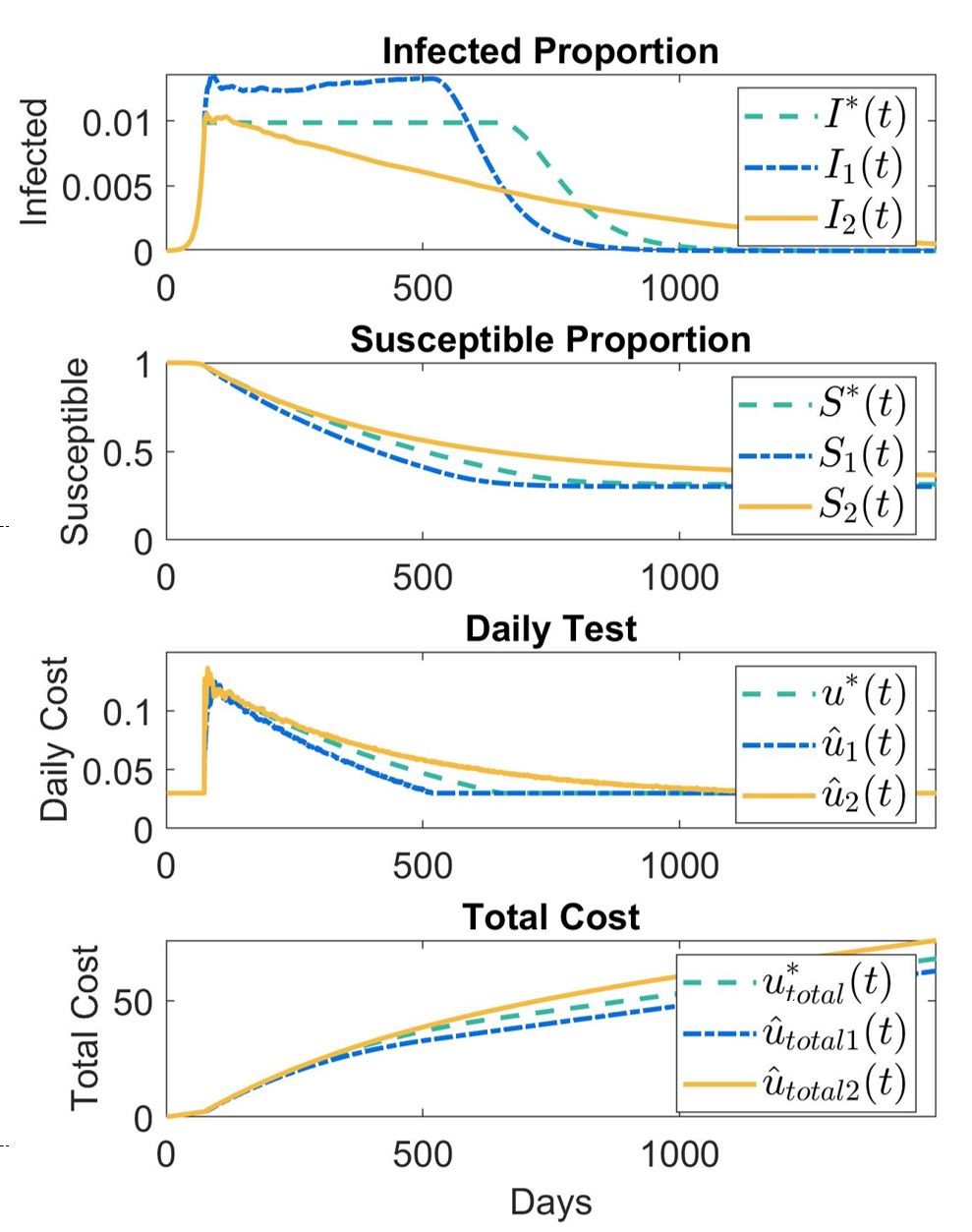}
%\par\end{centering}
\centering{}\caption{Comparison Between Testing Strategies}
\label{fig:SIR}
\end{figure}

\begin{figure}[h]
    \centering
\includegraphics[ trim = 0.5cm 1cm 1cm 1cm, clip, width=\columnwidth]{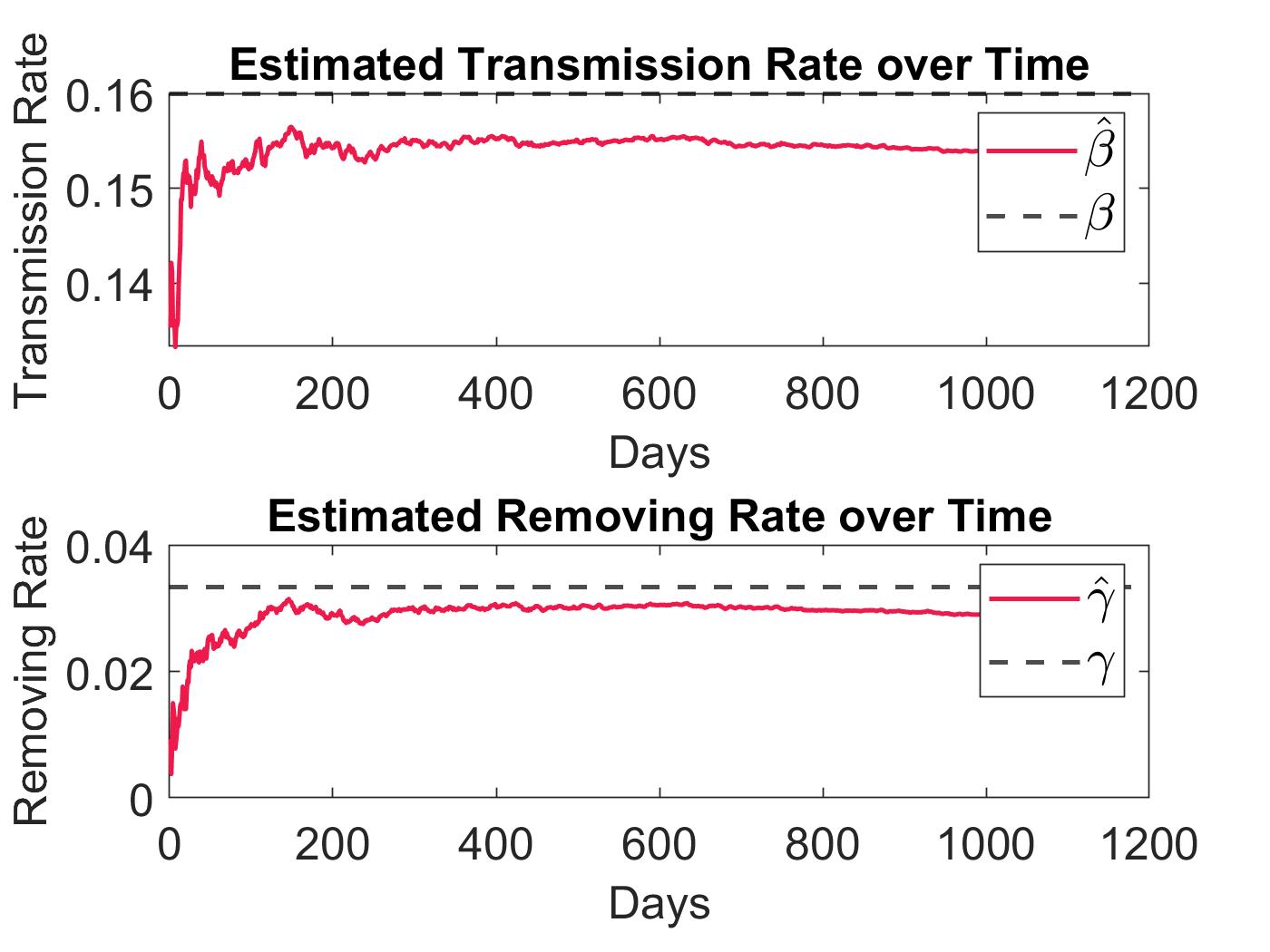}
%\par\end{centering}
\centering{}\caption{Parameter Estimation}
\label{fig:SIR_P}
\end{figure}
We compare the trajectories in Fig.~\ref{fig:SIR}. The simulation illustrates that the control system is \bshee{nearly} feasible by leveraging Strategy \bshee{2}, as demonstrated by $I_2(t)$ in Fig.~\ref{fig:SIR}. However, when leveraging the learned parameters directly  (Strategy~\bshee{1}), the system becomes infeasible. As shown in Fig.~\ref{fig:SIR}, the infection state $I_1(t)$ is still increasing after reaching $\bar{I}$. The cause of this phenomenon is that when the system starts to change the testing policy from $\hat{u}_1(t)=\underline{u}$ to $\hat u_1(t)=\hat{S}_1(t)\hat{\beta}(t)-\hat{\gamma}(t)$ at the time step when $\hat{I}_1(t)\geq \bar{I}$, \bshee{the highly underestimated transmission rate $\hat\beta(t)$, shown in Fig.~\ref{fig:SIR_P},} leads to the underestimation of the seriousness of the epidemic, and the testing rate $\hat u_1(t)$. Further, as illustrated in Fig.~\ref{fig:SIR}, $\hat u_1(t)\leq u^*(t)$ during the epidemic outbreak, \bshee{which will generate insufficient testing resources to maintain the infection level under the infection threshold $\bar{I}$}. Recall from Lemma \ref{lem:cost}, the optimal control policy is the pointwise smallest testing strategy we can leverage to ensure the system is feasible. Hence, $\hat u_1(t)\leq u^*(t)$ during the outbreak will lead to the system becoming infeasible. %Meanwhile, the same system under Policy One is feasible. %Hence, Definite \ref{def:optimal_p} is more robust to state estimation and parameter learning process under noisy data sets. 
Regarding \baike{the second statement of Remark~\ref{remark}}, the simulation shows that it takes longer for the system under Strategy~\bshee{2} to reach the herd immunity, compared to the system under the optimal testing strategy. The daily testing generated through Strategy~\bshee{2} is higher than the optimal daily testing,
captured by $ \hat u_1(t)\geq u^*(t)$ $\forall t\geq 0$.
% From the comparison between 
By comparing the simulated susceptible states, 
we see Strategy \bshee{2} generates fewer or equal total uninfected population at any given time step, i.e., $S_1(t)\geq S^*(t)$ $\forall t\geq0$, which
%The learning cost (cumulative testing cost) is determined by the state and parameter learning processes. 
implies that Strategy \bshee{2} will cause fewer people to be infected over the course of the outbreak, that is, $I^*(t)+R^*(t)\geq I_1(t)+R(t)$ for all $t\geq 0$. 

\section{Conclusion}
\label{section5}
In this work, we study the impact of \baike{uncertainties introduced by}
parameter learning and state estimation \baike{in real-time} optimal epidemic mitigation.
% process for the optimal epidemic mitigation strategy. 
We show the effectiveness of the proposed testing strategy \baike{when overestimating the seriousness of the epidemic under the condition that the ranges of the parameters and states are known}. Compared to the optimal testing strategy, the proposed strategy 
can flatten the curve effectively with more cost in terms of testing and time. However, we have shown analytically that the proposed strategy 
% can 
generates fewer or equal cumulative infected individuals at any given time step up to the optimal herd immunity point and it appears, via simulations, to be true for all time.
% When following the proposed strategy, the accuracy of the parameter learning and state estimation process will impact time and the cost of epidemic mitigation process.
%l framework through both theoretical analysis and simulations. Further, we characterize the learning cost when assembling the robust testing framework.
In the current work, we assume the ranges of both the states and parameters are known. Future work will propose strategies to learn the parameters and embed the parameter learning techniques into the proposed testing and isolation framework.

\normalem
\bibliographystyle{IEEEtran}
%{\footnotesize
\bibliography{IEEEabrv,main}

% Generated by IEEEtran.bst, version: 1.14 (2015/08/26)
\begin{thebibliography}{10}
\providecommand{\url}[1]{#1}
\csname url@samestyle\endcsname
\providecommand{\newblock}{\relax}
\providecommand{\bibinfo}[2]{#2}
\providecommand{\BIBentrySTDinterwordspacing}{\spaceskip=0pt\relax}
\providecommand{\BIBentryALTinterwordstretchfactor}{4}
\providecommand{\BIBentryALTinterwordspacing}{\spaceskip=\fontdimen2\font plus
\BIBentryALTinterwordstretchfactor\fontdimen3\font minus
  \fontdimen4\font\relax}
\providecommand{\BIBforeignlanguage}[2]{{%
\expandafter\ifx\csname l@#1\endcsname\relax
\typeout{** WARNING: IEEEtran.bst: No hyphenation pattern has been}%
\typeout{** loaded for the language `#1'. Using the pattern for}%
\typeout{** the default language instead.}%
\else
\language=\csname l@#1\endcsname
\fi
#2}}
\providecommand{\BIBdecl}{\relax}
\BIBdecl

\bibitem{tsay2020modeling}
C.~Tsay, F.~Lejarza, M.~A. Stadtherr, and M.~Baldea, ``Modeling, state
  estimation, and optimal control for the {US COVID-19} outbreak,''
  \emph{Scientific Reports}, vol.~10, no.~1, pp. 1--12, 2020.

\bibitem{perkins2020optimal}
T.~A. Perkins and G.~Espa{\~n}a, ``Optimal control of the {COVID-19} pandemic
  with non-pharmaceutical interventions,'' \emph{Bulletin of Mathematical
  Biology}, vol.~82, no.~9, pp. 1--24, 2020.

\bibitem{acemoglu2021optimal}
D.~Acemoglu, A.~Fallah, A.~Giometto, D.~Huttenlocher, A.~Ozdaglar, F.~Parise,
  and S.~Pattathil, ``Optimal adaptive testing for epidemic control: Combining
  molecular and serology tests,'' \emph{arXiv preprint arXiv:2101.00773}, 2021.

\bibitem{morris2021optimal}
D.~H. Morris, F.~W. Rossine, J.~B. Plotkin, and S.~A. Levin, ``Optimal,
  near-optimal, and robust epidemic control,'' \emph{Communications Physics},
  vol.~4, no.~1, pp. 1--8, 2021.

\bibitem{kohler2020robust}
J.~K{\"o}hler, L.~Schwenkel, A.~Koch, J.~Berberich, P.~Pauli, and
  F.~Allg{\"o}wer, ``Robust and optimal predictive control of the {COVID-19}
  outbreak,'' \emph{Annual Reviews in Control}, 2020.

\bibitem{carli2020model}
R.~Carli, G.~Cavone, N.~Epicoco, P.~Scarabaggio, and M.~Dotoli, ``Model
  predictive control to mitigate the {COVID-19} outbreak in a multi-region
  scenario,'' \emph{Annual Reviews in Control}, vol.~50, pp. 373--393, 2020.

\bibitem{zino2021analysis}
L.~Zino and M.~Cao, ``Analysis, prediction, and control of epidemics: A survey
  from scalar to dynamic network models,'' \emph{IEEE Circuits and Systems
  Magazine}, vol.~21, no.~4, pp. 4--23, 2021.

\bibitem{she2022mpcepi}
B.~She, S.~Sundaram, and P.~E. Par\'e, ``A learning-based model predictive
  control framework for real-time {SIR} epidemic mitigation,'' in
  \emph{Proceedings of the American Control Conference (ACC 2022)}, 2022, pp.
  2565 --2570.

\bibitem{khadilkar2020optimising}
H.~Khadilkar, T.~Ganu, and D.~P. Seetharam, ``Optimising lockdown policies for
  epidemic control using reinforcement learning,'' \emph{Transactions of the
  Indian National Academy of Engineering}, vol.~5, no.~2, pp. 129--132, 2020.

\bibitem{scarabaggio2021nonpharmaceutical}
P.~Scarabaggio, R.~Carli, G.~Cavone, N.~Epicoco, and M.~Dotoli,
  ``Nonpharmaceutical stochastic optimal control strategies to mitigate the
  {COVID-19} spread,'' \emph{IEEE Transactions on Automation Science and
  Engineering}, 2021.

\bibitem{bastani2021efficient}
H.~Bastani, K.~Drakopoulos, V.~Gupta, I.~Vlachogiannis, C.~Hadjicristodoulou,
  P.~Lagiou, G.~Magiorkinis, D.~Paraskevis, and S.~Tsiodras, ``Efficient and
  targeted {COVID-19} border testing via reinforcement learning,''
  \emph{Nature}, vol. 599, no. 7883, pp. 108--113, 2021.

\bibitem{bloem2009optimal}
M.~Bloem, T.~Alpcan, and T.~Ba{\c{s}}ar, ``Optimal and robust epidemic response
  for multiple networks,'' \emph{Control Engineering Practice}, vol.~17, no.~5,
  pp. 525--533, 2009.

\bibitem{nowzari2016epidemics}
C.~Nowzari, V.~M. Preciado, and G.~J. Pappas, ``{Analysis and control of
  epidemics: A survey of spreading processes on complex networks},'' \emph{IEEE
  Control Systems Magazine}, vol.~36, no.~1, pp. 26--46, 2016.

\bibitem{di2017optimal}
P.~Di~Giamberardino and D.~Iacoviello, ``Optimal control of {SIR} epidemic
  model with state dependent switching cost index,'' \emph{Biomedical Signal
  Processing and Control}, vol.~31, pp. 377--380, 2017.

\bibitem{sharomi2017optimal}
O.~Sharomi and T.~Malik, ``Optimal control in epidemiology,'' \emph{Annals of
  Operations Research}, vol. 251, no. 1-2, pp. 55--71, 2017.

\bibitem{di2019optimal}
P.~Di~Giamberardino and D.~Iacoviello, ``Optimal resource allocation to reduce
  an epidemic spread and its complication,'' \emph{Information}, vol.~10,
  no.~6, p. 213, 2019.

\bibitem{liu2019bivirus}
J.~Liu, P.~E. Par{\'e}, A.~Nedi\'c, C.~Y. Tang, C.~L. Beck, and T.~Ba\c{s}ar,
  ``Analysis and control of a continuous-time bi-virus model,'' \emph{IEEE
  Transactions on Automatic Control}, vol.~64, no.~12, pp. 4891--4906, 2019.

\bibitem{dangerfield2019resource}
C.~E. Dangerfield, M.~Vyska, and C.~A. Gilligan, ``Resource allocation for
  epidemic control across multiple sub-populations,'' \emph{Bulletin of
  Mathematical Biology}, vol.~81, no.~6, pp. 1731--1759, 2019.

\bibitem{preciado2014epidemic_optimal}
V.~M. Preciado, M.~Zargham, C.~Enyioha, A.~Jadbabaie, and G.~Pappas, ``Optimal
  resource allocation for network protection: A geometric programming
  approach,'' \emph{IEEE Transactions on Control of Network Systems}, vol.~1,
  no.~1, pp. 99--108, 2014.

\bibitem{han2015data}
S.~Han, V.~M. Preciado, C.~Nowzari, and G.~J. Pappas, ``Data-driven network
  resource allocation for controlling spreading processes,'' \emph{IEEE
  Transactions on Network Science and Engineering}, vol.~2, no.~4, pp.
  127--138, 2015.

\bibitem{chowell2017fitting}
G.~Chowell, ``Fitting dynamic models to epidemic outbreaks with quantified
  uncertainty: {A} primer for parameter uncertainty, identifiability, and
  forecasts,'' \emph{Infectious Disease Modelling}, vol.~2, no.~3, pp.
  379--398, 2017.

\bibitem{baker2018mechanistic}
R.~E. Baker, J.-M. Pena, J.~Jayamohan, and A.~J{\'e}rusalem, ``Mechanistic
  models versus machine learning, a fight worth fighting for the biological
  community?'' \emph{Biology Letters}, vol.~14, no.~5, p. 20170660, 2018.

\bibitem{wilke2020predicting}
C.~O. Wilke and C.~T. Bergstrom, ``Predicting an epidemic trajectory is
  difficult,'' \emph{Proceedings of the National Academy of Sciences}, vol.
  117, no.~46, pp. 28\,549--28\,551, 2020.

\bibitem{casella2020can}
F.~Casella, ``Can the {COVID-19} epidemic be controlled on the basis of daily
  test reports?'' \emph{IEEE Control System Letters}, vol.~5, no.~3, pp.
  1079--1084, 2020.

\bibitem{grundel2021coordinate}
S.~M. Grundel, S.~Heyder, T.~Hotz, T.~K.~S. Ritschel, P.~Sauerteig, and
  K.~Worthmann, ``How to coordinate vaccination and social distancing to
  mitigate {SARS-CoV-2} outbreaks,'' \emph{SIAM Journal on Applied Dynamical
  Systems}, vol.~20, no.~2, pp. 1135--1157, 2021.

\bibitem{Purdue_testing}
{IDA+A}, ``Ongoing {COVID-19} results for testing administered by the {Protect
  Purdue Health Center},'' 2020, https://protect.purdue.edu/dashboard/.

\bibitem{kermack1927_sir}
W.~O. Kermack and A.~G. McKendrick, ``A contribution to the mathematical theory
  of epidemics,'' \emph{Proceedings of the Royal Society A}, vol. 115, no. 772,
  pp. 700--721, 1927.

\bibitem{draper1998applied}
N.~R. Draper and H.~Smith, \emph{Applied Regression Analysis}.\hskip 1em plus
  0.5em minus 0.4em\relax John Wiley \& Sons, 1998, vol. 326.

\end{thebibliography}

%\bibliographystyle{IEEEtran}
%\bibliography{IEEEabrv,main}

\end{document}